\newtheorem{lemma}{Lemma}
\newtheorem{theorem}{Theorem}
\newtheorem{definition}{Definition}
\newtheorem{remark}{Remark}
\DeclarePairedDelimiter\ceil{\lceil}{\rceil}
\DeclarePairedDelimiter\floor{\lfloor}{\rfloor}
\newtheorem{assumption}{Assumption}
\title[Incentive Analysis for Agent Participation in Federated Learning]{Incentive Analysis for Agent Participation in Federated Learning}
\author{Lihui Yi}
\affiliation{%
  \institution{Northwestern University}
  \country{USA}} 
  \email{lihuiyi2027@u.northwestern.edu}
 \author{Xiaochun Niu}
\affiliation{%
  \institution{Northwestern University}
  \country{USA}} 
  \email{xiaochunniu2024@u.northwestern.edu}
 \author{Ermin Wei}
\affiliation{%
  \institution{Northwestern University}
  \country{USA}} 
  \email{ermin.wei@northwestern.edu}
\begin{abstract}
Federated learning offers a decentralized approach to machine learning, where multiple agents collaboratively train a model while preserving data privacy. In this paper, we investigate the decision-making and equilibrium behavior in federated learning systems, where agents choose between participating in global training or conducting independent local training. The problem is first modeled as a stage game and then extended to a repeated game to analyze the long-term dynamics of agent participation. For the stage game, we characterize the participation patterns and identify Nash equilibrium, revealing how data heterogeneity influences the equilibrium behavior---specifically, agents with similar data qualities will participate in FL as a group. We also derive the optimal social welfare and show that it coincides with Nash equilibrium under mild assumptions. In the repeated game, we propose a privacy-preserving, computationally efficient myopic strategy. This strategy enables agents to make practical decisions under bounded rationality and converges to {a neighborhood of Nash equilibrium} of the stage game in finite time. 
By combining theoretical insights with practical strategy design, this work provides a realistic and effective framework for guiding and analyzing agent behaviors in federated learning systems. 
\end{abstract}
\begin{document}

\begin{titlepage}

\maketitle
\makeatletter \gdef\@ACM@checkaffil{} \makeatother

\vspace{1cm}
\setcounter{tocdepth}{2} 
\tableofcontents

\end{titlepage}

\section{Introduction}
Federated learning (FL) has been revolutionary in modern machine learning, addressing growing concerns about data privacy and decentralization. By enabling agents to collaboratively train a global model without directly transferring raw data, FL has found applications across a wide range of industries. For instance, it is used in healthcare to build predictive models from distributed hospital datasets without compromising patients' confidentiality \cite{FL_healthcare, FL_healthcare2, FL_healthcare3, FL_healthcare4}; FL also improves personalized user experience in mobile and IoT devices without exporting sensitive user data to servers \cite{keyboard, emoji}; in vehicular networks, FL enables autonomous vehicles to jointly learn and optimize their controls without sharing their history trajectories \cite{autonomous}.

In addition to advances in federated learning algorithms \cite{mcmahan2017, li2019, collins2022} and generalizations \cite{mohri2019agnostic, lin2020ensemble, fallah2021generalization, zhu2021data, sefidgaran2022rate}, the economic perspective of FL is equally important and needs research attention. While FL aims to harness data distributed across numerous agents, the whole system may not always benefit from the participation of every single agent \cite{MinghongLocal2020}. On the other hand, individual agents may not always benefit from engaging in the FL process either \cite{sheller_2020_federated}. Therefore, mechanism design and incentive analysis play a crucial role in sustainable FL systems.

A substantial body of research has explored incentive mechanisms by modeling the costs of data sharing, communication, and computation in FL, but without considering data quality.
Specifically, 
\cite{ruta2023} examines agents' trade-offs between the cost of data sharing and FL benefit through a game-theoretic framework, demonstrating the existence of Nash equilibrium and propose a budget-balanced mechanism to maximize welfare across agents. 
\cite{Kang2024} studies a multi-server selection game in FL with the consideration of users' communication costs and server handover costs, to optimize the system's energy consumption. 
\cite{Zhang2023enabling} considers agents' computation and communication costs, and model the long-term FL participation by an infinitely repeated game. They propose a cooperative strategy that is a subgame-perfect equilibrium and minimizes the number of free riders. Similarly, 
\cite{bi2024understanding} models the formation and long-term partnerships in FL by an iterated prisoner's dilemma, where they consider an aggregated linear cost of data sharing. Note that the latter two works both focus on long-term FL participation, however, neither considers the possibility of agents leaving the system.

Though many research works consider the cost of data sharing, communication, and computation, the study of incentives related to data quality is mostly missing. Experimental evidence shows that agents' heterogeneous data distributions can largely influence their preferences for FL participation, as well as the system's preference for retaining the agents. In particular, 
\cite{sheller_2020_federated} experiments with the task of distinguishing healthy brain tissue from tissue affected by cancer cells. Their results reveal that some FL agents achieved higher prediction accuracy through local training compared to participating in FL. This discrepancy is attributed to differences in data quality among agents. For some agents, the global FL model is trained on a dataset that is too different from their own, they may rather not participate in FL and train a local model instead. Observing the importance of data quality factor, 
\cite{Donahue_Kleinberg_2021} considers heterogeneous players with different data distributions in a one-shot game to study how they might divide into coalitions. However, they focus on linear regression and mean estimation problems in particular and do not consider long-term FL participation. Therefore, our work is driven by the lack of study on data quality consideration in long-term FL participation with agents allowed to leave the system.

In this paper, we address the fundamental question about agent participation in federated learning, focusing on the impact of data quality in both the short term and the long term. Specifically, we aim to understand the following aspects: a) How differences in data quality influence agents' decisions on opting in FL; b) What strategy maximizes social welfare and how achievable it is in decentralized environments; c) In repeated decision-making, with incomplete and imperfect information revealed, what strategy may guide the system into a stable state? 

To answer these questions, we first model a stage game of FL participation, where agents' payoffs/costs are derived from 
the performance of FL algorithms, with explicit modeling of data qualities. Then, we establish the existence of Nash equilibrium and characterize the equilibrium state, showing that agents with similar data qualities will participate in FL as a group. Moreover, we find that the optimal social welfare strategy coincides with the Nash equilibrium under mild assumptions, demonstrating the alignment between decentralized strategic behaviors and system efficiency. Upon this, we extend the stage game into an infinitely repeated game, to capture the long-term dynamics of FL participation. We propose a privacy-preserving myopic strategy, which {drives the system to a neighborhood of Nash equilibrium} of the stage game in finite time. 
The main contribution of this work is summarized as follows,
\begin{itemize}
    \item \textbf{Game-Theoretic Modeling}: we model a stage game of FL participation, with payoffs explicitly connected to agents' data qualities. {Later on, we extend it into a repeated game to study the long-term interactions.}
    \item \textbf{Nash Equilibrium and Social Welfare Maximization}: {we characterize the Nash equilibrium and the social welfare maximization strategy, and find the alignment between the two strategies.}
    \item \textbf{Simple and Efficient Learning Dynamics}: we propose a simple myopic strategy that {guides the system close to an equilibrium in finite time} while preserving privacy and requiring minimal local computational resources {and no central coordinator}. 
\end{itemize}

\section{Stage Game}

\subsection{Problem Formulation}

In this section, we investigate a finite stage game $\mathcal{G}$ with $m$ agents possibly participating in a federated learning system, where each agent $i \in \{1,2,...,m\}$ has $n > 0$ data points available following a distribution $\mathcal{D}_i$ with mean $\mu_i$. The order of agents is arranged such that $\mu_1 \le \mu_2 \le \cdots \le \mu_m$. For simplicity, we assume the means are evenly spread out, i.e., $\Delta \coloneq \mu_m - \mu_{m-1} = \cdots = \mu_2 - \mu_1$ and $\Delta \ge 0$. {The parameter $\Delta$ is referred to as the \textit{data separation} or \textit{mean separation}.}

Agents aim to learn a good prediction model either collaboratively through FL or individually through local machine learning. That is to say, each player $i$ has a {binary} action set $S_i = \{0, 1\}$. They may choose to perform local training, {i.e., opt out,} with strategy $s_i=0$; or to participate in federated learning leveraging information about all available data within the federated system  {, i.e., opt in, } with $s_i=1$. We denote the set of players who {opt in} by the \textit{participant set} $\Omega \coloneq \big\{i \in \{1,2,...,m\}:\, s_i = 1\big\}$. The player in the participant set is called an \textit{FL agent}, and the player not in the participant set is called a \textit{non-FL agent}.

Let $S = S_1 \times \cdots \times S_m$. For $s = (s_1,\dots,s_m) \in S$, denote $s_{-i}$ as the profile of player strategies other than player $i$, i.e., 
$$s_{-i} = (s_1,\dots,s_{i-1},s_{i+1},\dots,s_m).$$
With this notation, we will use $s$ and $(s_i, s_{-i})$ interchangeably to represent the strategy profile of all players. {We also call it the \textit{state} of the system without ambiguity.}

In our model, each agent $i$ aims to minimize their local expected loss, which can be approximated by a cost function $c_i: S \to \mathbb{R}_{\ge 0}$. Specifically, {we use } the performance bounds of FL {as cost function}, with the explicit connection to data qualities. If player $i$ participates in federated learning, i.e., $s_i=1$, then we define the cost function similar as \cite{Meng},
\begin{align}\label{eq: cost join}
    c_i(s_i=1, s_{-i}) \coloneq \frac{a}{|\Omega| \cdot n} + \left| \mu_i - \frac{\sum_{j \in \Omega}\mu_j}{|\Omega|} \right|.
\end{align}
Here, $a \in \mathbb{R}_{>0}$ is a constant determined by {the specific FL problem}, the first term accounts for the {total} number of data points in the FL system, and the second term captures the difference between the local distribution and the averaged distribution across all participants. {When there is a large number of data available in FL system, the prediction accuracy increases; therefore the cost decreases. For agents with data similar to those in the system, the federated model generalizes well to their local data distributions, leading to lower loss.} On the other hand, if player $i$ performs local training instead of joining the FL system, i.e., $s_i = 0$, then we define the cost function as
\begin{align}\label{eq: cost not join}
    c_i(s_i=0, s_{-i}) \coloneq \frac{a}{n}.
\end{align}
Note that depending on player $i$'s strategy $s_i$, the cost function may or may not be independent of other players' strategies $s_{-i}$.

\subsection{Nash Equilibrium}
To study the strategic interactions in a complex system with self-interested agents, one key solution concept is \textit{Nash equilibrium}, which describes a stable state where no agent benefits from unilaterally changing their strategy. Formally, in our context, we have the following definition.
\begin{definition}
    A strategy profile $s^* \in S$ is called a pure Nash equilibrium\footnote{We will focus on pure Nash equilibrium in this paper, and henceforth refer to a pure Nash equilibrium simply as an equilibrium.} if, for every player $i \in \{1,2,...,m\}$, and every $s'_i$, we have $c_i(s_i^*, s_{-i}^*) \le c_i(s'_i, s_{-i}^*)$.\footnote{Note that the definition is slightly different from the convention. This is because the players aim to minimize their cost functions in our setup, rather than maximizing payoffs.}
\end{definition}

We observe the following characteristics of any equilibrium, if any exists.

\begin{lemma}\label{lemma: grouping feature of NE}
Assume $s^* =(s_1^*, \cdots, s_m^*)$ is an equilibrium of the stage game $\mathcal{G}$. Fix $i\in\{1,2,...,m\}$, if there exist $p,q\in \{1,2,...,m\}$ such that $p<i<q$  and $s_p^* = s_q^* = 1$, then $s_i^* = 1$.
\end{lemma}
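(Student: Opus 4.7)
The plan is to argue by contradiction: assume $s_i^* = 0$ while $s_p^* = s_q^* = 1$ with $p < i < q$, and derive a contradiction by showing that agent $i$ would strictly prefer to switch to $s_i = 1$.

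First, I will extract quantitative constraints from the equilibrium conditions at $p$ and $q$. Writing $\Omega = \{j : s_j^* = 1\}$ and $\bar{\mu} = \frac{1}{|\Omega|}\sum_{j \in \Omega} \mu_j$, the fact that neither $p$ nor $q$ wants to deviate to local training gives $\frac{a}{|\Omega|n} + |\mu_p - \bar{\mu}| \le \frac{a}{n}$ and likewise for $q$. Both inequalities can be rearranged to
\begin{equation*}
|\mu_p - \bar{\mu}|,\; |\mu_q - \bar{\mu}| \;\le\; \frac{a(|\Omega|-1)}{|\Omega|\, n}.
\end{equation*}

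Next, I will use the ordering $\mu_p \le \mu_i \le \mu_q$ to transfer this bound to $i$. A short case split on whether $\bar{\mu}$ lies below $\mu_i$ or above $\mu_i$ shows that $|\mu_i - \bar{\mu}|$ is dominated by at least one of $|\mu_p - \bar{\mu}|$ and $|\mu_q - \bar{\mu}|$, so the same upper bound $\frac{a(|\Omega|-1)}{|\Omega| n}$ applies to $\mu_i$.

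Finally, I will compute the hypothetical cost of $i$ joining. Because adding $\mu_i$ into the average shrinks the deviation, $|\mu_i - \bar{\mu}'| = \frac{|\Omega|}{|\Omega|+1}|\mu_i - \bar{\mu}|$ where $\bar{\mu}'$ denotes the new mean over $\Omega \cup \{i\}$. Plugging in the bound and simplifying yields
\begin{equation*}
c_i(1, s_{-i}^*) \;\le\; \frac{a}{(|\Omega|+1)\,n} + \frac{|\Omega|}{|\Omega|+1}\cdot \frac{a(|\Omega|-1)}{|\Omega|\, n} \;=\; \frac{a\,|\Omega|}{(|\Omega|+1)\,n} \;<\; \frac{a}{n} \;=\; c_i(0, s_{-i}^*),
\end{equation*}
contradicting the equilibrium assumption at $i$. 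I do not anticipate a serious obstacle here; the only slightly delicate step is the case analysis that bounds $|\mu_i - \bar\mu|$ using the equilibrium constraints at $p$ and $q$, since $\bar\mu$ need not lie in $[\mu_p, \mu_q]$, but both extreme cases fall out immediately from monotonicity.
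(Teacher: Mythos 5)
Your proposal is correct and follows essentially the same route as the paper's proof: the paper's Case (II) is exactly your contradiction argument, using the same ingredients (the equilibrium bounds at $p$ and $q$, the monotonicity step $|\mu_i-\bar\mu|\le\max\{|\mu_p-\bar\mu|,|\mu_q-\bar\mu|\}$ from $\mu_p\le\mu_i\le\mu_q$, and the shrinkage identity giving the factor $\tfrac{|\Omega|}{|\Omega|+1}<1$). The only cosmetic difference is that the paper keeps the cost chain intact rather than first converting it into the explicit bound $\tfrac{a(|\Omega|-1)}{|\Omega| n}$, and it additionally records the (logically redundant) check that $s_i=1$ is stable.
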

\begin{proof}
For the equilibrium strategy $s^*$, we denote the corresponding participant set by $\Omega^*$. Then, $s_p^* = 1$ implies, 
$$c_p(s_p^*=1, s_{-p}^*)  = \frac{a}{|\Omega^*|\cdot n} + \left|\mu_p - \frac{\sum_{j\in\Omega^*}\mu_j}{|\Omega^*|}\right| 
 \le c_p(s_p=0, s_{-p}^*) = \frac{a}{n}.$$
Similarly, $s_q^*=1$ implies that
$$c_q(s_q^*=1, s_{-q}^*) = \frac{a}{|\Omega^*|\cdot n} + \left|\mu_q - \frac{\sum_{j\in\Omega^*}\mu_j}{|\Omega^*|}\right| 
 \le c_q(s_q=0, s_{-q}^*) = \frac{a}{n}.$$
Since $p<i<q$, we have $\mu_p \le \mu_i \le \mu_q$ and thus 
$$\left|\mu_i - \frac{\sum_{j\in\Omega^*}\mu_j}{|\Omega^*|}\right| \le \max\left\{\left|\mu_p - \frac{\sum_{j\in\Omega^*}\mu_j}{|\Omega^*|}\right|, \left|\mu_q - \frac{\sum_{j\in\Omega^*}\mu_j}{|\Omega^*|}\right|\right\}.$$ 
We then study the two cases $s_i=1$ and $s_i=0$ separately.
\paragraph{Case (I)} Suppose $s_i=1$, then we have $p,i,q \in \Omega^*$ and thus
\begin{align*}
    c_i(s_i=1, s_{-i}^*) & = \frac{a}{|\Omega^*|\cdot n} + \left|\mu_i - \frac{\sum_{j\in\Omega^*}\mu_j}{|\Omega^*|}\right| \\
    & \le \max\left\{c_p(s_p^*=1, s_{-p}^*), c_q(s_q^*=1, s_{-q}^*) \right\} \\
    & \le \max\left\{c_p(s_p=0, s_{-p}^*), c_q(s_q=0, s_{-q}^*) \right\} \\
    & = \frac{a}{n} = c_i(s_i=0, s_{-i}^*),
\end{align*}
which proves that agent $i$ has no incentive to deviate from $s_i = 1$. 

\paragraph{Case (II)} Suppose $s_i=0$, then we have $p,q\in\Omega^*$ but $i\notin\Omega^*$. Thus, 
\begin{align*}
    c_i(s_i=1, s_{-i}^*) & = \frac{a}{(|\Omega^*| +1)n} + \left|\mu_i - \frac{\mu_i+\sum_{j\in\Omega^*}\mu_j}{|\Omega^*| +1}\right| \\
    & = \frac{|\Omega^*|}{|\Omega^*| +1} \cdot \left(\frac{a}{|\Omega^*|\cdot n} + \left|\mu_i - \frac{\sum_{j\in\Omega^*}\mu_j}{|\Omega^*|}\right| \right)\\
    &< \frac{a}{|\Omega^*|\cdot n} + \left|\mu_i - \frac{\sum_{j\in\Omega^*}\mu_j}{|\Omega^*|}\right|\\
    &\le \max\left\{c_p(s_p=0, s_{-p}^*), c_q(s_q=0, s_{-q}^*) \right\} \\
    & = \frac{a}{n} = c_i(s_i=0, s_{-i}^*).
\end{align*}
This implies that agent $i$ benefits from unilaterally deviation. Hence, if $s^*$ is an equilibrium point, then $s_i^*=1$.
\end{proof}

Essentially, the lemma indicates that, in any equilibrium state (if exists), agents with similar data qualities will make the same decision on joining or not joining FL. Formally, we capture this ``grouping'' feature by the following notions.
\begin{definition}
    We call $\mathcal{P} \subseteq \{1,2,...,m\}$ a $k$-consecutive set if $|\mathcal{P}| = k$ and, for any $p,q \in \mathcal{P}$, there holds $i \in \mathcal{P}$ for every $i \in \{1,2,...,m\}$ such that $p<i<q$. 
\end{definition}
\begin{definition}
    We say a strategy profile $s \in S$ forms a $k$-consecutive participation if there exists a $k$-consecutive set $\mathcal{P} \subseteq \{1,2,...,m\}$ such that $s_i = 1$ for any $i \in \mathcal{P}$ and $s_j = 0$ for any $j \notin \mathcal{P}$. In other words, the participant set $\Omega$ is a $k$-consecutive set. 
\end{definition}
We then show the existence of the Nash equilibrium and provide its explicit characterization. 
\begin{theorem}\label{thm: NE}
    For the stage game $\mathcal{G}$, Nash equilibria exist and have the following two types:
    \paragraph{type 1 (0-consecutive participation)} $s^* = 0$;
    \paragraph{type 2 ($k^*$-consecutive participation)} $s^*$ forms a $k^*$-consecutive participation, where
    \begin{align}\label{eq: k* explicit form}
        k^* = \begin{dcases}
            m, \quad\text{if }\; 0 \le \Delta < \frac{2a}{mn};\\
            \ceil*{\frac{2a}{n\Delta}-1} \text{ or } \floor*{\frac{2a}{n\Delta}}, \quad\text{if }\; \frac{2a}{mn} \le \Delta \le \frac{a}{n};\\
            1, \quad\text{if }\; \Delta > \frac{a}{n}.
        \end{dcases}
    \end{align}
    {Moreover, any strategy $s$ that forms a $k^*$-consecutive participation with $k^*$ satisfying Eq. \eqref{eq: k* explicit form} is a type 2 Nash equilibrium.}
\end{theorem}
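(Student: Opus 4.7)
The plan is to handle the two types of equilibria separately, using Lemma~\ref{lemma: grouping feature of NE} to reduce the type 2 case to analyzing $k$-consecutive participations parameterized only by the segment size $k$.

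For type 1, I would first verify that from $s^* = 0$ a unilateral deviation by any agent $i$ to $s_i = 1$ yields $|\Omega| = 1$ and $|\mu_i - \mu_i| = 0$, giving cost $a/n$. Since this equals the original cost of $s_i = 0$, no agent strictly benefits from deviating and $s^* = 0$ is a (weak) equilibrium.

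For type 2, Lemma~\ref{lemma: grouping feature of NE} forces any non-trivial equilibrium to have a consecutive participant set $\Omega^* = \{i_0+1, \ldots, i_0+k\}$. The arithmetic-progression assumption on the $\mu_j$'s would give participant mean $\bar{\mu}^* = \mu_{i_0+1} + (k-1)\Delta/2$, with the maximum distance from a participant equal to $(k-1)\Delta/2$ (at the two endpoints) and the minimum distance from a non-participant equal to $(k+1)\Delta/2$ (at the two indices flanking $\Omega^*$, when they exist). Both quantities depend only on $k$, never on $i_0$, which would immediately yield the ``moreover'' claim. I would then translate the two equilibrium conditions: no participant opts out, $\frac{a}{kn} + (k-1)\Delta/2 \le \frac{a}{n}$; and no non-participant opts in, which after using the identity $|\mu_j - (k\bar{\mu}^* + \mu_j)/(k+1)| = \frac{k}{k+1}|\mu_j - \bar{\mu}^*|$ reduces to $(k+1)\Delta/2 \ge \frac{a}{n}$. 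Short algebra gives the bounds $k \le 2a/(n\Delta)$ (vacuous at $k = 1$) and $k \ge 2a/(n\Delta) - 1$ (vacuous at $k = m$).

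The case split on $\Delta$ is then routine. For $0 \le \Delta < 2a/(mn)$ the lower bound exceeds $m-1$, forcing $k = m$; for $\Delta > a/n$ the upper bound is below $2$, forcing $k = 1$; and for $2a/(mn) \le \Delta \le a/n$ the feasible sizes are exactly the (one or two) integers lying in the unit-length interval $[2a/(n\Delta) - 1,\, 2a/(n\Delta)]$, namely $\lceil 2a/(n\Delta) - 1 \rceil$ and $\lfloor 2a/(n\Delta) \rfloor$. The hard part will be keeping track of the degenerate corners: $k = 1$ where the participant inequality collapses to $0 \le 0$; $k = m$ where the non-participant inequality does not apply; $\Delta = 0$ where the explicit bounds are formally ill-defined but only $k = m$ survives; and the boundary values $\Delta = 2a/(mn)$ and $\Delta = a/n$ where adjacent cases overlap. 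Once these corners are bookkept carefully, the three-piece formula in Eq.~\eqref{eq: k* explicit form} follows.
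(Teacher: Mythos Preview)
Your proposal is correct and follows essentially the same route as the paper: verify $s^*=0$ directly, invoke Lemma~\ref{lemma: grouping feature of NE} to reduce to $k$-consecutive blocks, compute the worst-case participant cost $\tfrac{a}{kn}+\tfrac{(k-1)\Delta}{2}$ and the best-case deviator cost $\tfrac{a}{(k+1)n}+\tfrac{k\Delta}{2}$, and extract the two inequalities $\Delta\le \tfrac{2a}{nk}$ and $\Delta\ge\tfrac{2a}{n(k+1)}$ to obtain the three-piece formula. Your attention to the degenerate corners ($k=1$, $k=m$, $\Delta=0$, and the interval endpoints) is in fact slightly more careful than the paper's own presentation.
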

\begin{proof}
    We first show the existence of type 1 equilibrium. Consider $s^* = (s_1^*, \dots, s_m^*) = (0, \dots, 0)$ in any stage game $\mathcal{G}$, if some agent $i$ deviates to $s_i = 1$, their cost becomes $c_i(s_i=1, s_{-i}^*) = a/n$, which is the same as $c_i(s_i=0, s_{-i}^*)$. Hence, $s^*=0$ is an equilibrium by definition. 

    By Lemma \ref{lemma: grouping feature of NE}, it is straightforward to see that any equilibrium $s \neq 0$, if it exists, forms a $k$-consecutive participation for some $k\in\{1,2,...,m\}$. Next, we validate that $s$ is indeed a Nash equilibrium if and only if $k$ satisfies Eq. \eqref{eq: k* explicit form}. Since the cost functions are inherently different for an FL agent and a non-FL agent, we consider the behaviors of FL agents and non-FL agents separately.
    \paragraph{Case (I)} 
    For any $i\in\Omega$, i.e., $s_i=1$, since the strategy profile $s$ forms a $k$-consecutive participation, we have $c_i(s_i=1,s_{-i}) = \frac{a}{k\cdot n} + \big|\mu_i - \frac{\sum_{j\in\Omega}\mu_j}{k}\big|$. We observe that the FL agent who bears the largest cost is the one with the largest or smallest index in the participant set. That is to say, 
    $$\max_{i\in\Omega} c_i(s_i=1,s_{-i}) = \frac{a}{kn} + \frac{(k-1)\Delta}{2}.$$
    Thus, no FL agent has incentive to deviate from $s_i=1$ to $s'_i=0$ if and only if 
    $$\max_{i\in\Omega} c_i(s_i=1,s_{-i}) \le a/n = c_i(s'_i=0,s_{-i}),$$
    which always holds if $k=1$, and is equivalent to $\Delta \le \frac{2a}{nk}$ otherwise.

    \paragraph{Case (II)} 
    For any $i \notin \Omega$, i.e., $s_i=0$ and $k \neq m$, their cost after deviating to $s'_i=1$ is 
    $$c_i(s'_i=1,s_{-i}) = \frac{a}{(k+1)n} + \left|\mu_i - \frac{\mu_i+\sum_{j\in\Omega}\mu_j}{k+1}\right|.$$
    We observe that the non-FL agent who bears the lowest cost if joining the FL is the one(s) whose index is the closest to that of an arbitrary FL agent. In other words, 
    $$\min_{i \notin \Omega} c_i(s'_i=1,s_{-i}) = \frac{a}{(k+1)n} + \frac{k\Delta}{2}.$$
    Thus, no non-FL agent has incentive to deviate from $s_i=0$ to $s'_i=1$ if and only if
    $$\min_{i \notin \Omega} c_i(s'_i=1,s_{-i}) \ge a/n = c_i(s_i=0,s_{-i}),$$
    which is equivalent to $\Delta \ge \frac{2a}{n(k+1)}$. 


{The strategy profile $s$ is an equilibrium if and only if both conditions in the two cases established above are satisfied, i.e.,}
    \begin{align*}{
        \begin{dcases}
            \Delta \ge \frac{2a}{n(k+1)} = \frac{a}{n}, \quad\text{for}\;\; k=1;\\
            \Delta \le \frac{2a}{nk} = \frac{2a}{nm}, \quad\text{for}\;\; k=m;\\
            \frac{2a}{n(k+1)} \le \Delta \le \frac{2a}{nk}, \quad\text{otherwise}.
        \end{dcases}}
    \end{align*}
    Clearly, the conditions above are well-defined. Since $k\in\mathbb{N}_{>0}$ and $\Delta \ge 0$, we may write the explicit form of $k$ as follows,
    \begin{align*}
        k = \begin{dcases}
            1, \quad\text{if}\;\; \Delta > a/n;\\
            m, \quad\text{if}\;\; 0 \le \Delta < 2a/(nm);\\
            \ceil*{\frac{2a}{n\Delta}-1} \text{ or } \floor*{\frac{2a}{n\Delta}}, \quad\text{otherwise},
        \end{dcases}
    \end{align*}
    which is precisely Eq. \eqref{eq: k* explicit form}. 
\end{proof}

The two types of equilibrium for the stage game are rather intuitive. For type 1 equilibrium, if every player opts out of the FL system and one unilaterally deviates, the one who joins cannot gain any benefit since no extra data samples are shared with the server. Meanwhile, those who stay out will not be affected since they all perform local training. Thus, no one participating in the FL is always an equilibrium strategy. For type 2 equilibrium, the intuition is that agents with similar data qualities can collaboratively reduce their costs by joining the FL, and those with very distinct distributions cannot benefit from the global model as it is trained on a {dataset that is too different from their own}. {For the same reason, the more spread out the agents' distribution means, the fewer FL participants there will be in a type 2 equilibrium. Moreover, more data samples that every agent owns also result in fewer FL participants, as many agents become more willing to train a better local model with the increased sample points.}
From now on, we will use $k^*$ to denote the number of participants in a type 2 equilibrium.

It is important to note that the stage game $\mathcal{G}$ may have multiple type 2 equilibria, as there is more than one $k^*$-consecutive set if $k^* < m$. Moreover, there might be $k_1 \neq k_2$ such that $s^*$ is an equilibrium with either $k_1$ or $k_2$ participants. {For simplicity, we restrict our attention to cases where type 2 equilibrium has a unique and odd number of FL participants. That is to say, if $\frac{2a}{nm} \le \Delta \le \frac{a}{n}$, then $k ^* = \ceil*{\frac{2a}{n\Delta}-1} = \floor*{\frac{2a}{n\Delta}}$ and is odd.} 
Formally, we have the following assumption,
\begin{assumption}\label{assumption: odd k}
    {We consider cases where the type 2 equilibrium of the stage game $\mathcal{G}$ has a unique and odd number of FL participants $k^*$.}
\end{assumption}
The assumption is achievable in a wide range of parameters $\Delta$. Fig. \ref{fig: k-Delta} demonstrates $k^*$ with respect to $\Delta$, where odd $k^*$ is shown in red and even $k^*$ is shown in blue. Under Assumption \ref{assumption: odd k}, the total number of equilibria in the stage game is $m-k^*+2$, including one type 1 equilibrium and $m-k^*+1$ type 2 equilibria. 

\begin{figure}[h]
  \centering
  \includegraphics[width=0.38\linewidth]{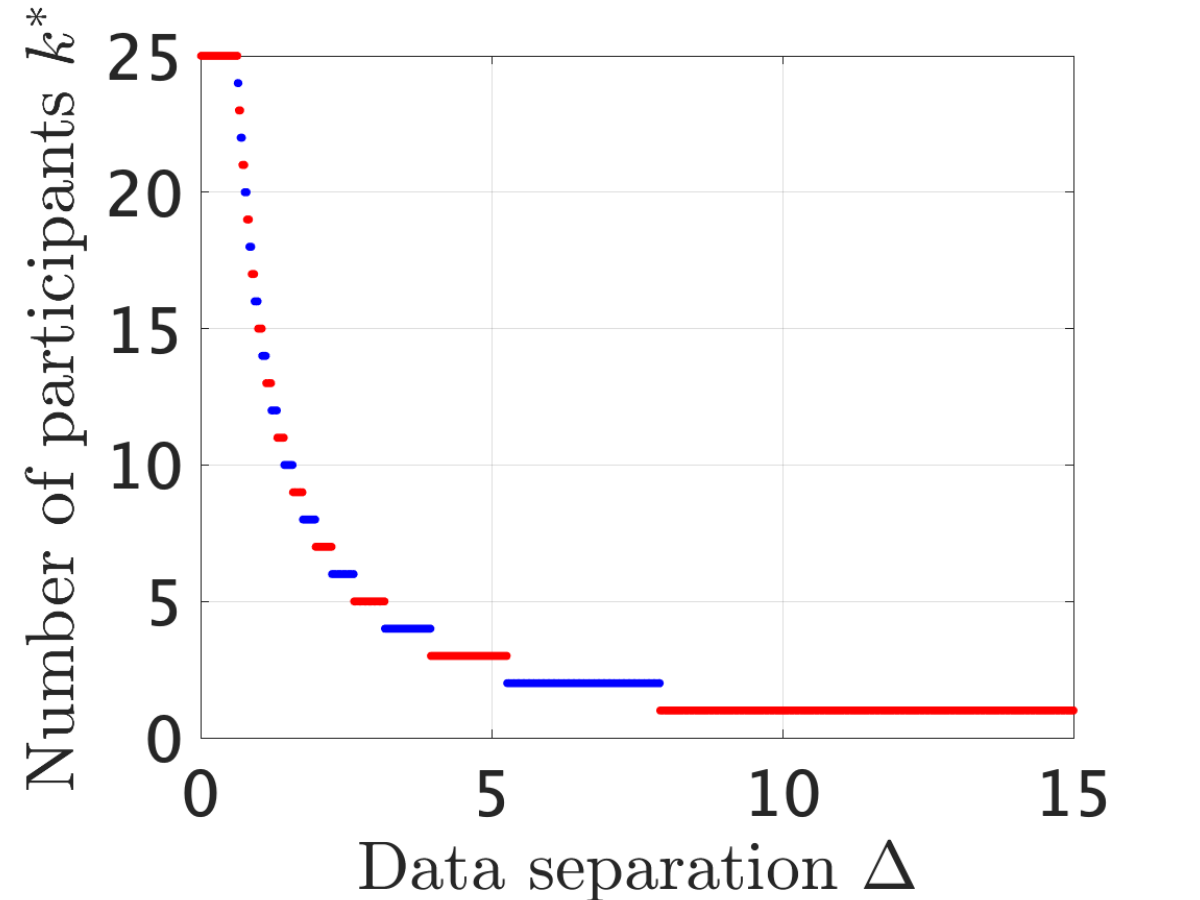}
  \caption{The relation between the number of participants in a type 2 equilibrium $k^*$ and the data separation $\Delta$. The following parameters are used for the plot: $m=25$, $n=100$, $a=790$.}
  \label{fig: k-Delta}
\end{figure}

\subsection{Social Welfare Maximization}\label{subsec: social welfare}

In a decentralized system where agents engage in strategic decision-making, the Nash equilibrium reveals a stable state, as no individual agent has an incentive to unilaterally deviate. However, an equilibrium does not necessarily guarantee that the system is efficient from a social welfare perspective. Therefore, a natural question to ask is how close an equilibrium state is to achieving optimal social welfare, which reflects the collective benefit/cost for all agents in the system. To address this, we investigate the strategy that maximizes social welfare and analyze its relationship to the type 2 equilibrium identified in the stage game $\mathcal{G}$. First of all, we define \textit{social welfare} as $W(s) = \sum_{i=1}^{m}-c_i(s_i, s_{-i})$, also known as the utilitarian welfare function in economic theory \cite{welfare_function}. The following lemma provides a concrete expression of $W(s)$.

\begin{lemma}\label{lemma: welfare formula}
    For any strategy profile $s$ with the corresponding participant set $\Omega$, the social welfare can be expressed as
    \begin{align*}
        W(s) = \big( |\Omega|-m-1 \big)\frac{a}{n} - f(\Omega),
    \end{align*}
    where $f(\Omega) \coloneq \sum_{i\in\Omega}\big| \mu_{i}-\frac{\sum_{j\in\Omega}\mu_j}{|\Omega|} \big|$.
\end{lemma}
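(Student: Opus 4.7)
The plan is to prove this by a direct computation, partitioning the sum defining $W(s)$ according to whether each agent is in the participant set $\Omega$ or not, and then applying the two cost formulas \eqref{eq: cost join} and \eqref{eq: cost not join}.

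Concretely, I would start from $W(s) = -\sum_{i=1}^{m} c_i(s_i, s_{-i})$ and split this sum as
\[
W(s) = -\sum_{i \in \Omega} c_i(s_i = 1, s_{-i}) \;-\; \sum_{i \notin \Omega} c_i(s_i = 0, s_{-i}).
\]
For the first sum, substituting \eqref{eq: cost join} gives $\sum_{i\in\Omega}\bigl(\tfrac{a}{|\Omega|n} + |\mu_i - \tfrac{\sum_{j\in\Omega}\mu_j}{|\Omega|}|\bigr)$. The constant piece telescopes across the $|\Omega|$ participants to yield exactly $\tfrac{a}{n}$, while the absolute-value piece is by definition $f(\Omega)$. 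For the second sum, substituting \eqref{eq: cost not join} immediately gives $(m-|\Omega|)\cdot \tfrac{a}{n}$. Combining these contributions,
\[
W(s) = -\frac{a}{n} - f(\Omega) - (m-|\Omega|)\frac{a}{n} = \bigl(|\Omega| - m - 1\bigr)\frac{a}{n} - f(\Omega),
\]
which is the claimed identity.

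There is essentially no obstacle here: the lemma is a pure bookkeeping consequence of the definition of $W$ together with the piecewise definition of $c_i$. The only subtlety worth noting is that the $\tfrac{a}{|\Omega|n}$ term inside each participant's cost depends on $|\Omega|$, but summing it over the $|\Omega|$ participants cancels the denominator cleanly, which is what makes the final expression depend on $|\Omega|$ only linearly (through the $(|\Omega| - m - 1)\tfrac{a}{n}$ factor) plus the heterogeneity penalty $f(\Omega)$.
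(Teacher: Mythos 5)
Your proposal is correct and follows essentially the same argument as the paper: split $W(s)$ over $i \in \Omega$ and $i \notin \Omega$, substitute the two cost formulas, note that summing $\frac{a}{|\Omega| n}$ over the participants yields $\frac{a}{n}$, and combine. No gaps.
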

\begin{proof}
    For $\forall i \in \Omega$, the cost function is $c_i(s_i=1, s_{-i}) = \frac{a}{|\Omega| \cdot n} + \big| \mu_i - \frac{\sum_{j \in \Omega}\mu_j}{|\Omega|} \big|$. By summing up the costs among $\Omega$, we obtain 
    $$\sum_{i\in\Omega} c_i(s_i,s_{-i})= \frac{a}{n} + \sum_{i\in\Omega}\Big| \mu_{i}-\frac{\sum_{j\in\Omega}\mu_j}{|\Omega|} \Big|.$$
    For $\forall i \notin \Omega$, the cost function is $c_i(s_i=0, s_{-i}) = \frac{a}{n}$. By summing up all $c_i$, we obtain
    $$\sum_{i\notin\Omega} c_i(s_i,s_{-i}) = \big( m-|\Omega| \big)\frac{a}{n}.$$
    Hence, by definition of the social welfare, we have
    $$W(s) = -\sum_{i\in\Omega} c_i(s_i,s_{-i}) - \sum_{i\notin\Omega} c_i(s_i,s_{-i}) = \big( |\Omega|-m-1 \big)\frac{a}{n} - \sum_{i\in\Omega}\Big| \mu_{i}-\frac{\sum_{j\in\Omega}\mu_j}{|\Omega|} \Big|. $$ 
\end{proof}

Next, we identify the optimal strategy profile that maximizes social welfare and demonstrate its equivalence to a type 2 equilibrium of the stage game $\mathcal{G}$. 

\begin{theorem}\label{thm: social welfare maximization}
    A strategy profile $s$ achieves the maximum social welfare if and only if it is a type 2 equilibrium of the stage game $\mathcal{G}$, {assuming the number of FL participants in the type 2 equilibrium $k^*$ is unique and odd.}
\end{theorem}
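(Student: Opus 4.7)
The plan is to use Lemma~\ref{lemma: welfare formula} to reduce welfare maximization to a discrete optimization over the participant set $\Omega \subseteq \{1,\ldots,m\}$, and then match the optimum to the type 2 equilibrium structure of Theorem~\ref{thm: NE}. By Lemma~\ref{lemma: welfare formula}, $W(s)$ depends on $s$ only through $\Omega$ (with $W = -ma/n$ when $\Omega = \emptyset$), so the maximization factors into (i) choosing a size $k = |\Omega|$, and (ii) choosing the best size-$k$ subset.

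For step (ii), I would argue that for fixed $k$ the quantity $f(\Omega)$ is strictly minimized by the $k$-consecutive sets. Since the means are evenly spaced, $f(\Omega) = \Delta \sum_{i \in \Omega}|i - \bar\imath_\Omega|$, making this a purely combinatorial statement. If $\Omega$ contains a gap---indices $p < r < q$ with $p, q \in \Omega$ but $r \notin \Omega$---I would replace whichever of $\min \Omega$ or $\max \Omega$ is farther from $\bar\imath_\Omega$ by $r$ and verify, by a direct computation analogous to Case (II) of Lemma~\ref{lemma: grouping feature of NE}, that this strictly decreases $f$. Iterating this swap collapses $\Omega$ to a consecutive block, and a short arithmetic then gives the closed form
\[
    f_k \;:=\; \min_{|\Omega|=k} f(\Omega) \;=\; \begin{dcases} \frac{(k^2-1)\Delta}{4}, & k \text{ odd}, \\ \frac{k^2\Delta}{4}, & k \text{ even}.\end{dcases}
\]

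For step (i), define $V(k) := (k-m-1)a/n - f_k$, the best welfare achievable by a size-$k$ participant set. I would compute the forward differences by parity, obtaining $V(k+1) - V(k) = a/n - (k+1)\Delta/2$ when $k$ is odd and $V(k+1) - V(k) = a/n - k\Delta/2$ when $k$ is even, and note that odd values of $k$ dominate their even neighbors by a $\Delta/4$ correction. It follows that the unique integer maximizer of $V$ over $\{1, \ldots, m\}$ is the odd $k$ satisfying $\tfrac{2a}{n(k+1)} < \Delta < \tfrac{2a}{n(k-1)}$ (with appropriate one-sided boundaries at $k = 1$ and $k = m$), which by comparison with Eq.~\eqref{eq: k* explicit form} is exactly the $k^*$ of Theorem~\ref{thm: NE}. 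A final direct check gives $V(k^*) > -ma/n = W(s=0)$ whenever $k^* \ge 3$, ruling out empty participation.

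Assembling these pieces, the $\Leftarrow$ direction is immediate: any type 2 equilibrium has $\Omega$ equal to some $k^*$-consecutive set, so $W(s) = V(k^*)$, the maximum. For the $\Rightarrow$ direction, uniqueness of the integer maximizer forces $|\Omega| = k^*$, and strict minimality of $f$ on consecutive sets forces $\Omega$ to be $k^*$-consecutive, i.e., a type 2 equilibrium. The main obstacle will be the consecutivity claim in step (ii): swapping an index shifts $\bar\imath_\Omega$, so the absolute values make the inequality nonobvious, and one must pick the swap direction according to which side of the mean is heavier. A secondary complication is the odd/even bookkeeping in step (i), which is precisely why Assumption~\ref{assumption: odd k} restricts to an odd $k^*$---otherwise $V(k^*-1) = V(k^*+1)$ could also attain the maximum, breaking the ``only if'' direction.
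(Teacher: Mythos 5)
Your proposal is correct and follows the same overall route as the paper: reduce via Lemma~\ref{lemma: welfare formula} to an optimization over the participant set, show that for fixed $k=|\Omega|$ the penalty $f(\Omega)$ is minimized exactly by $k$-consecutive sets with the same parity-dependent closed form, and then optimize over $k$ and identify the maximizer with $k^*$ from Theorem~\ref{thm: NE}. Where you differ is the final integer-optimization step: the paper compares the best achievable even-$|\Omega|$ welfare with the worst odd-$|\Omega|$ welfare by evaluating the two quadratics at the unconstrained maximizer $\tfrac{2a}{n\Delta}$ and at $\tfrac{2a}{n\Delta}\pm1$, whereas you use forward differences of $V(k)$ together with the observation that $V$ at an even $k$ is the average of its two odd neighbors; both yield that the optimum is the odd rounding of $\tfrac{2a}{n\Delta}$, i.e.\ $k^*$ under Assumption~\ref{assumption: odd k} (your difference formulas check out). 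Your version additionally supplies three details the paper glosses over and which the ``only if'' direction actually needs: strictness of the consecutivity claim (the paper just asserts term-by-term minimization; your swap argument works, though bounding $f(\Omega)\ge\sum_{i\in\Omega}|\mu_i-\mu_{\mathrm{med}}|$ with the median and noting equality of mean and median on consecutive blocks sidesteps the shifting-mean issue you flag), uniqueness of the integer maximizer (which needs $\tfrac{2a}{n\Delta}$ to be non-integral, guaranteed by Assumption~\ref{assumption: odd k}), and the comparison with $\Omega=\emptyset$, which the paper omits entirely. One caveat your ``whenever $k^*\ge 3$'' check quietly exposes, and which applies to the paper's statement as well: when $k^*=1$ (i.e.\ $\Delta>a/n$) one has $V(1)=-ma/n=W(s=0)$, so the type 1 equilibrium also attains the maximum welfare and the ``only if'' direction fails in that boundary regime; the paper hides this by silently restricting to $|\Omega|\in\{1,\dots,m\}$.
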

\begin{proof}
    For strategy profile $s$ with the corresponding participant set $\Omega$, we observe from Lemma \ref{lemma: welfare formula} that the social welfare $W(s)$ is a function of $\Omega$. Specifically, the first term is only related to the number of participants in the FL system, whereas the second term is dependent on the specific elements in $\Omega$. Here, we first fix $|\Omega| = k$ for any
    $k \in \{1,2,...,m\}$ and solve the subproblem: $\min_{\Omega\subseteq \{1,2,...,m\}, |\Omega| = k} f(\Omega)$.

    Denote $\Bar{\mu} = \frac{1}{|\Omega|}\sum_{j\in \Omega}\mu_j$, the subproblem is rewritten as,
    \begin{align*}
        \min_{\Omega\subseteq \{1,2,...,m\}, |\Omega|=k} \quad f(\Omega)= \sum_{i\in\Omega}| \mu_{i}-\Bar{\mu} |
    \end{align*}
    The best possible minimal value of $f(\Omega)$ is when each individual term is minimized. Since $|\Omega| = k$ and $\mu_m - \mu_{m-1} = \cdots = \mu_2 - \mu_1$, it is equivalent to selecting a $k$-consecutive set of agents. Thus, the optimal solution of the preceding subproblem is any $\Omega$ that is a $|\Omega|$-consecutive set, where the optimal objective value is 
    \begin{align*}
        f^*(\Omega) = \begin{dcases}
            \frac{|\Omega|^2-1}{4}\Delta, &\quad\text{if $|\Omega|$ is odd};\\
            \frac{|\Omega|^2}{4}\Delta, &\quad\text{if $|\Omega|$ is even}.
        \end{dcases}
    \end{align*}
    Therefore, we see that the minimum value of $f(\Omega)$ is a function of $|\Omega|$. According to Lemma \ref{lemma: welfare formula}, to maximize $W(s)$, the best possible solution is maximizing the first term $\big( |\Omega|-m-1 \big)\frac{a}{n}$ and meanwhile minimizing the second term $f(\Omega)$. Hence, we can equivalently solve the social welfare maximization problem $\max_s W(s)$ by substituting $f^*(\Omega)$ into $W(s)$, i.e., 
    \begin{align*}
        W(s) = \begin{dcases}
            \big( |\Omega|-m-1 \big)\frac{a}{n} - \frac{|\Omega|^2-1}{4}\Delta, &\quad\text{if $|\Omega|$ is odd};\\
            \big( |\Omega|-m-1 \big)\frac{a}{n} - \frac{|\Omega|^2}{4}\Delta, &\quad\text{if $|\Omega|$ is even}.
        \end{dcases}
    \end{align*}
    Thus, we observe that the original optimization problem becomes $\max_{|\Omega| \in \{1,2,...,m\}} W(s)$. Note that, without the integer constraint of $|\Omega|$, the maximizer for both cases is $\frac{2a}{n\Delta}$. However, with the constraint added, we {maximize over} the following two cases separately.
    \paragraph{Case (I)} For even $|\Omega|$, the best possible maximum $W^*(s)$ is achieved when $\frac{2a}{n\Delta}$ is an even number, i.e., 
    $$W^*(s) \le W(s)\big|_{|\Omega|=\frac{2a}{n\Delta}} = -(m+1)\frac{a}{n}+\frac{a^2}{n^2\Delta},$$
    with equality if and only if $|\Omega|=\frac{2a}{n\Delta}$ and $\frac{2a}{n\Delta}$ is even.
    \paragraph{Case (II)} For odd $|\Omega|$, the worst possible maximum $W^*(s)$ is achieved when $\frac{2a}{n\Delta}$ is an even number as well, i.e., 
    $$W^*(s) \ge W(s)\big|_{|\Omega|=\frac{2a}{n\Delta} \pm 1} = -(m+1)\frac{a}{n}+\frac{a^2}{n^2\Delta},$$
    with equality if and only if $|\Omega|=\frac{2a}{n\Delta} \pm 1$ and $\frac{2a}{n\Delta}$ is even. {This shows that the worst possible maximum $W^*(s)$ for odd $|\Omega|$ is still better or equal to the best possible maximum $W^*(s)$ for even $|\Omega|$, indicating an odd $|\Omega|$ is generally preferred to maximize the social welfare.}

    Therefore, the optimal solution $|\Omega|^*$ is to round $\frac{2a}{n\Delta}$ either up or down, whichever is an odd number, except when $\frac{2a}{n\Delta}$ is an even number, then $\frac{2a}{n\Delta}$ and $\frac{2a}{n\Delta} \pm 1$ are all optimal. Since we assume type 2 equilibrium of the stage game $\mathcal{G}$ has an odd number of participants, based on Theorem \ref{thm: NE}, $\floor*{\frac{2a}{n\Delta}}$ must be an odd number. Hence, we have $|\Omega|^*=\floor*{\frac{2a}{n\Delta}}=k^*$, where $k^*$ is the number of participants in a type 2 equilibrium of the stage game $\mathcal{G}$. The proof is then complete. 
\end{proof}

This result highlights that decentralized agent behaviors, as characterized by Nash equilibrium, can align with the system's efficiency, as defined by social welfare maximization. It provides a theoretical foundation for strategy design, revealing the possibility of steering agents toward optimal social welfare outcomes, even in decentralized environments. Thus, in the context of long-term FL participation, achieving optimal social welfare only requires designing strategies that guide agents to an equilibrium.

\begin{remark}
    While the equivalence holds in our model, we acknowledge the concerns about its robustness in more complex federated learning settings. For instance, the assumption of evenly spread out means might influence both equilibrium and optimal social welfare strategy. We leave the question to future investigation. 
\end{remark}

\section{Repeated Game}

{To model} long-term federated learning, {we consider the setup where} agents repeatedly decide whether to participate in FL. After each training and reward period, agents observe the past cost(s) incurred and reconsider their decisions on FL participation. This dynamic decision-making process can be captured by an infinitely repeated game, denoted as $\mathcal{G}^\infty$. In this section, we model this repeated interaction and propose {a} simple dynamic learning rule that guides the system {close to} both equilibrium and the optimal social welfare state of the stage game $\mathcal{G}$. 

\subsection{Problem Formulation}
First, we extend the stage game $\mathcal{G}$ into an infinitely repeated game $\mathcal{G}^\infty$, where agents repeatedly interact over discrete time stages $t \in \{0,1,2,\dots\}$. At each stage $t$, the action set of player $i$ is denoted by $S_i^t$, and the joint action set of all players is $S^t = S_1^t\times\cdots\times S_m^t$. The strategy profile of all players (state of the system) is denoted by $s^t = (s_i^t,s_{-i}^t)$ where $s_i^t$ is the strategy of player $i$, and $s_{-i}^t$ is the strategy of all other players. The participant set at time $t$ is then defined as $\Omega^t \coloneq \big\{i \in \{1,2,...,m\}: \, s_i^t = 1\big\}$. Similar to Eqs. \eqref{eq: cost join} and \eqref{eq: cost not join}, the cost (negative payoff) for players depends on their strategy. If player $i$ opts in FL at time $t$,
$$c_i^t(s_i^t=1, s_{-i}^t) \coloneq \frac{a}{|\Omega^t|n} + \left| \mu_i - \frac{\sum_{j \in \Omega^t}\mu_j}{|\Omega^t|} \right|.$$
If player $i$ opts out of FL and performs local training,
$$c_i^t(s_i^t=0, s_{-i}^t) \coloneq \frac{a}{n}.$$

\subsection{Subgame-Perfect Equilibrium}
To establish a baseline, we consider the ideal case where players have complete and perfect information. This includes knowledge of the number of players, the distribution means of others, their actions and payoffs, etc. Under these conditions, we show that the outcome of repeated equilibrium plays of the stage game is supported by a subgame-perfect equilibrium \cite{fudenberg_1991_game}.

\begin{theorem}
    The following grim-trigger strategy is subgame perfect: for every player $i$,
    \begin{enumerate}
        \item play a type 2 equilibrium strategy $s_i^*$ of the stage game $\mathcal{G}$ in the first stage and continue to play $s_i^*$ if no one deviates;
        \item if someone deviates, play a type 1 equilibrium strategy from that point onward.
    \end{enumerate}
\end{theorem}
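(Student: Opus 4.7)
The plan is to invoke the one-shot deviation principle: a strategy profile in an infinitely repeated game is subgame perfect if and only if no player has a profitable one-shot deviation at any history. I would split histories into two classes---on-path (no deviation has yet occurred) and off-path (a deviation has already triggered the punishment phase)---and verify the deviation check separately in each.

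For off-path subgames, the prescribed strategy has every agent play $s_i = 0$ from that point onward, so I would check that a one-shot deviation by agent $i$ to $s_i' = 1$ makes them the sole FL participant, yielding current cost $a/(1 \cdot n) + |\mu_i - \mu_i| = a/n$, identical to the cost of staying out, while the continuation is unchanged (everyone still plays $0$ thereafter). This confirms the punishment phase is self-enforcing, essentially reusing the type 1 equilibrium property from Theorem \ref{thm: NE}.

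For on-path subgames, the argument reduces directly to the stage game Nash conditions for the type 2 equilibrium. For an FL agent ($s_i^* = 1$) contemplating deviation to $0$, both the current cost and the continuation cost weakly worsen: the stage game Nash inequality $c_i(s_i^*=1, s_{-i}^*) \le a/n$ handles the current stage, and the switch in continuation from $c_i(s_i^*, s_{-i}^*)$ per stage to $a/n$ per stage (due to the trigger) handles the future. For a non-FL agent ($s_i^* = 0$) contemplating deviation to $1$, the Nash inequality $c_i(s_i'=1, s_{-i}^*) \ge a/n$ makes the current cost weakly larger, while the continuation cost is $a/n$ per stage with or without deviation (the deviator was already incurring $a/n$ forever on path). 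In neither case is there a profitable one-shot deviation.

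The main obstacle I anticipate is the absence of an explicit discount factor in the model, which makes a naive total-payoff comparison ill-defined over infinite horizons. I would resolve this by observing that the deviation check collapses to a stage-by-stage comparison: since the continuation cost after deviation weakly dominates the on-equilibrium continuation in every future period, the argument holds uniformly under any standard intertemporal criterion (discounted with any $\delta \in (0,1)$, limit-of-means, or overtaking), so subgame perfection does not depend on this choice. A secondary subtlety is that several deviation inequalities are weak rather than strict, but this suffices for Nash equilibrium and hence for subgame perfection.
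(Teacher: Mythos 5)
Your proposal is correct and follows essentially the same route as the paper, which also argues via the one-stage deviation principle that at every history players are playing a stage-game Nash equilibrium (type 2 on path, type 1 off path), so no unilateral deviation lowers cost. Your write-up is simply a more detailed version of the paper's brief argument, with the added (valid) observation that the period-by-period cost comparison makes the conclusion independent of the intertemporal evaluation criterion.
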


The proof is straightforward: since players always follow a Nash equilibrium strategy at any time, no agent can reduce their cost by unilaterally deviating from the grim-trigger strategy. This ensures subgame perfection based on the one-stage deviation principle. 

While multiple subgame-perfect equilibria may exist since the stage game has multiple Nash equilibria, the key takeaway is that 
the outcome of repeated type 2 equilibrium plays can be supported by a subgame-perfect equilibrium. 
In such cases, a subset of agents opt in FL, while others opt out in every stage. By Theorem \ref{thm: social welfare maximization}, this outcome also leads to the best possible social welfare over time. 

However, achieving it in the real world would require we address the following impractical assumptions: 1) Players need to have full knowledge about the system, including other players' decisions, which is impractical in privacy-preserving FL systems; 2) A unique choice of equilibrium is 
required through some central coordination. Due to the existence of multiple type 2 equilibria in the stage game, simultaneous decision-making without coordination might end up with a non-equilibrium play in the first stage; 3) Players need to have unlimited cognitive ability and computation power, i.e., fully rational, to be able to capture the subgame perfect equilibrium, which rarely holds in real-world applications, especially in cross-device FL, where human or device limitations can lead to suboptimal decision-making.

\subsection{Myopic Strategy}

To address the drawbacks of the subgame-perfect equilibrium strategy, we propose a myopic strategy that is privacy-preserving, computationally efficient, and memory-light. Notably, this strategy converges to a {neighborhood of} Nash equilibrium of the stage game $\mathcal{G}$ in finite time. Since we have shown in Section \ref{subsec: social welfare} that an equilibrium aligns with an optimal social welfare strategy, the myopic strategy also provides a practical mechanism that drives the system toward near-optimal social welfare outcomes in decentralized FL settings. In this subsection, we first present the strategy in Algorithm 1, then establish the convergence properties, and finally give a few remarks regarding the practical use of the strategy.

\IncMargin{1em}
\begin{algorithm}
\caption{The myopic strategy}\label{alg: myopic strategy}
\Init{agents randomly choose $s^0$}{}
\For{$t=1,2,\dots$}{
server broadcasts $|\Omega^{t-1}|$, $\bar{\mu}^{t-1}=\sum_{j \in \Omega^{t-1}}\mu_j/|\Omega^{t-1}|$ (Set $\bar{\mu}^{t-1} = 0$ if $|\Omega^{t-1}| = 0$)\\
each agent $i$ calculates:\\
\If{$s_i^{t-1}=0$}{
    $c_i^t(\bar{s}_i^{t-1}=1,s_{-i}^{t-1}) \gets \frac{a}{(|\Omega^{t-1}|+1)n} + \big| \mu_i-\frac{|\Omega^{t-1}|\bar{\mu}^{t-1}+\mu_i}{|\Omega^{t-1}|+1} \big|$}
\ElseIf{{$s_i^{t-1}=1$}}
    {$c_i^t(\bar{s}_i^{t-1}=0,s_{-i}^{t-1}) \gets \frac{a}{n}$}
each agent $i$ chooses:\\
\If{$c_i^t(\bar{s}_i^{t-1},s_{-i}^{t-1}) < c_i^{t-1}(s_i^{t-1},s_{-i}^{t-1})$\label{line:inertia}}{
    $s_i^t = \bar{s}_i^{t-1}$}
\Else{}
{
    $s_i^t = s_i^{t-1}$}
}
\end{algorithm}

The myopic strategy uses the principle of ``best-reply dynamics'' \cite{best_reply}. After observing the costs realized in the previous stage, all agents use the server's broadcast information to evaluate the hypothetical costs they would have incurred by unilaterally choosing a different action. If their hypothetical costs are lower than the realized costs, agents will change their decisions simultaneously at the current stage.\footnote{Note that it is different from the conventional sequential best-response dynamics \cite{best_response}, where only one player updates their strategy at each stage. We will discuss more details later in the subsection.} We first summarize the updating principles of agents' decisions. 

\begin{lemma}\label{lemma: deviation condition}
    Suppose all players conduct the myopic strategy. For any stage $t > 0$, given the broadcast information $|\Omega^{t-1}| > 0$ and $\bar{\mu}^{t-1}$, the actions of players at stage $t$ are determined as follows,
    \begin{enumerate}
        \item for any non-FL agent $i$ at stage $t-1$: chooses $s_i^{t} = 1$ if $|\mu_i - \bar{\mu}^{t-1}| < \frac{a}{n}$ and $s_i^{t} = 0$ otherwise;
        \item for any FL agent $i$ at stage $t-1$: chooses $s_i^{t} = 0$ if $|\mu_i - \bar{\mu}^{t-1}| > \frac{a}{n}(1-\frac{1}{|\Omega^{t-1}|})$ and $s_i^{t} = 1$ otherwise.
    \end{enumerate}
\end{lemma}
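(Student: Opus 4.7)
The plan is to unfold the myopic update rule of Algorithm \ref{alg: myopic strategy} and directly verify the two switching conditions by algebraic manipulation of the cost expressions. Write $k \coloneq |\Omega^{t-1}| > 0$ and $\bar{\mu} \coloneq \bar{\mu}^{t-1}$ for brevity. The algorithm updates $s_i^t = \bar{s}_i^{t-1}$ precisely when the hypothetical cost of switching is \emph{strictly} less than the realized cost at stage $t-1$; otherwise $s_i^t = s_i^{t-1}$. So I split into the two cases based on $s_i^{t-1}$ and, in each case, compute both the realized cost and the hypothetical alternative, then rearrange the switching inequality.

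For a non-FL agent at $t-1$ (so $s_i^{t-1}=0$), the realized cost from Eq. \eqref{eq: cost not join} is $a/n$, and the hypothetical cost of joining is
\begin{align*}
c_i^t(\bar{s}_i^{t-1}=1, s_{-i}^{t-1}) = \frac{a}{(k+1)n} + \left|\mu_i - \frac{k\bar{\mu}+\mu_i}{k+1}\right|.
\end{align*}
The key simplification is that $\mu_i - \frac{k\bar{\mu}+\mu_i}{k+1} = \frac{k(\mu_i-\bar{\mu})}{k+1}$, so the absolute value collapses to $\frac{k}{k+1}|\mu_i-\bar{\mu}|$. The switching condition $c_i^t < a/n$ then becomes
\begin{align*}
\frac{a}{(k+1)n} + \frac{k}{k+1}|\mu_i-\bar{\mu}| < \frac{a}{n} \iff |\mu_i-\bar{\mu}| < \frac{a}{n},
\end{align*}
matching the stated rule for part (1); the reverse inequality (including equality) gives $s_i^t=0$.

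For an FL agent at $t-1$ (so $s_i^{t-1}=1$, hence $i\in\Omega^{t-1}$), the realized cost from Eq. \eqref{eq: cost join} is $\frac{a}{kn} + |\mu_i - \bar{\mu}|$, and the hypothetical cost of leaving is $a/n$. The switching condition $a/n < \frac{a}{kn} + |\mu_i-\bar{\mu}|$ rearranges directly to
\begin{align*}
|\mu_i - \bar{\mu}| > \frac{a}{n}\left(1 - \frac{1}{k}\right),
\end{align*}
giving part (2); again the reverse direction produces $s_i^t=1$.

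The argument is essentially a direct verification, so there is no substantive obstacle; the only point demanding care is the absolute-value simplification in the non-FL case, where one must recognize that adding $\mu_i$ into the average merely shrinks the deviation $|\mu_i - \bar{\mu}|$ by the factor $k/(k+1)$. Both conditions in the lemma end up as rearrangements of the algorithm's strict-inequality switching test, and the handling of the boundary cases ($c_i^t = c_i^{t-1}$) is governed by the \textbf{else} branch of line \ref{line:inertia}, producing the claimed rule for staying.
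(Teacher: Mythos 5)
Your proof is correct and follows essentially the same route as the paper's: unfold the algorithm's strict-inequality cost comparison in each of the two cases and rearrange, using the identity $\mu_i - \frac{k\bar{\mu}+\mu_i}{k+1} = \frac{k}{k+1}(\mu_i-\bar{\mu})$ for the non-FL case. Your explicit note on the inertia/boundary handling (the \textbf{else} branch when costs are equal) matches how the lemma's ``otherwise'' clauses are meant, so nothing is missing.
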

\begin{proof}
    For any non-FL agent $i$ at stage $t-1$, i.e., $s_i^{t-1} = 0$, they will opt in FL if and only if $c_i^t(\bar{s}_i^{t-1} = 1,s_{-i}^{t-1}) < c_i^{t-1}(s_i^{t-1} = 0,s_{-i}^{t-1})$. According to the myopic strategy, this is equivalent to 
    \begin{align*}
        \frac{a}{(|\Omega^{t-1}|+1)n} + \Big| \mu_i-\frac{|\Omega^{t-1}|\bar{\mu}^{t-1}+\mu_i}{|\Omega^{t-1}|+1} \Big| < \frac{a}{n}.
    \end{align*}
    Rearrange the terms, we obtain 
    $$\frac{|\Omega^{t-1}| |\mu_i - \bar{\mu}^{t-1}|}{|\Omega^{t-1}| + 1} < \frac{|\Omega^{t-1}|a}{(|\Omega^{t-1}|+1)n},$$
    which can be further simplified as $|\mu_i - \bar{\mu}^{t-1}| < \frac{a}{n}$ since $|\Omega^{t-1}| > 0$.

    For any FL agent $i$ at stage $t-1$, i.e., $s_i^{t-1} = 1$, they will opt out of FL if and only if $c_i^t(\bar{s}_i^{t-1} = 0,s_{-i}^{t-1}) < c_i^{t-1}(s_i^{t-1} = 1,s_{-i}^{t-1})$. In other words, 
    $$\frac{a}{n} < \frac{a}{|\Omega^{t-1}|n} + |\mu_i - \bar{\mu}^{t-1}|.$$
    Equivalently, we have $|\mu_i - \bar{\mu}^{t-1}| > \frac{a}{n}(1-\frac{1}{|\Omega^{t-1}|})$.
\end{proof}

Based on Lemma \ref{lemma: deviation condition} and assuming $|\Omega^{t-1}| > 0$, we have the following observations about the myopic strategy: \textbf{a) Non-FL agent behavior}: non-FL agents make decisions independently of $|\Omega^{t-1}|$. While it may seem intuitive that a larger number of existing FL participants would encourage non-FL agents to join, this is true when agents have identical distribution means. {However, for any data separation $\Delta>0$, there is a trade-off between the total number of data samples in the FL system (first term in Eq. \eqref{eq: cost join}) and the cost induced by different data qualities (second term in Eq. \eqref{eq: cost join}).}
{Though fewer FL participants at stage $t-1$ indicate fewer total data samples, it allows a non-FL agent to easily manipulate the global model with their own data if they want to opt in FL. Overall, non-FL agents are indifferent to how many FL participants are already in the FL system;} 
\textbf{b) FL agent behavior}: FL agents consider both $|\Omega^{t-1}|$ and $\bar{\mu}^{t-1}$ when making decisions. {A larger $|\Omega^{t-1}|$ indicates more data contributed to FL, thus the cost related to the total number of data samples is reduced.} As a result, a larger number of FL participants in the previous stage can discourage FL agents from leaving; \textbf{c) Inertia in player decisions}: {observed from line \ref{line:inertia} of Algorithm \ref{alg: myopic strategy},} players exhibit inertia, meaning that they always keep their previous actions when they are indifferent at the current stage. This feature helps stabilize the system outcomes. 

Since non-FL and FL agents make decisions based on different criteria, tracking the strategy profile across stages is challenging. However, if some particular states occur, the learning dynamics later on can be analyzed more easily.

\begin{lemma}\label{lemma: myopic dynamic}
    Suppose all players conduct the myopic strategy and recall $k^*$ is the number of FL participants in a type 2 equilibrium of the stage game $\mathcal{G}$. Assume $s^{t-1}$ forms a $|\Omega^{t-1}|$-consecutive participation for some $t > 0$ and $|\Omega^{t-1}|>0$, then $s^t$ also forms a $|\Omega^t|$-consecutive participation. Moreover, at stage $t$,
    \begin{enumerate}
        \item if $|\Omega^{t-1}| = k^*$: $s^t = s^{t-1}$;
        \item if $|\Omega^{t-1}| < k^*$: some non-FL agent(s) joins, no FL agent leaves, and $|\Omega^t| \ge |\Omega^{t-1}| + 1$;
        \item if $|\Omega^{t-1}| > k^*$: no non-FL agent joins, some FL agent(s) leaves, and $|\Omega^t| = k^*$ or $k^* \pm 1$.
    \end{enumerate}
\end{lemma}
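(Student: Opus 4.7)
The plan is to leverage Lemma \ref{lemma: deviation condition} together with the evenly-spread means. Writing the consecutive participant set as $\Omega^{t-1} = \{\ell, \ell+1, \ldots, \ell + k - 1\}$ with $k = |\Omega^{t-1}|$, the first step is to compute $\bar{\mu}^{t-1} = \mu_\ell + \frac{k-1}{2}\Delta$, so that every agent at index $i$ satisfies
\[
  \bigl|\mu_i - \bar{\mu}^{t-1}\bigr| \;=\; \Bigl|\,i - \ell - \tfrac{k-1}{2}\,\Bigr|\,\Delta.
\]
Because this distance is strictly increasing in the absolute index-offset from the center of $\Omega^{t-1}$, the thresholds appearing in Lemma \ref{lemma: deviation condition} ($a/n$ for non-FL joiners and $(a/n)(1 - 1/k)$ for FL leavers) carve out contiguous index sets. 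Consequently, the surviving FL agents form the inner contiguous block of $\Omega^{t-1}$ around its center, and the joining non-FL agents form two contiguous blocks immediately outside $\Omega^{t-1}$ that extend symmetrically about that center (truncated, if necessary, by the global boundary $\{1,\ldots,m\}$). Their union $\Omega^t$ is then itself a consecutive set, which gives the first assertion.

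To handle the three cases, I would translate Lemma \ref{lemma: deviation condition} into inequalities on $\Delta$ against $\frac{2a}{nk}$ and $\frac{2a}{n(k+1)}$, mirroring the manipulations in the proof of Theorem \ref{thm: NE}. Case 1 ($k = k^*$) then follows immediately from the equilibrium sandwich $\frac{2a}{n(k^*+1)} \le \Delta \le \frac{2a}{nk^*}$: the extremal FL offset $(k^*-1)\Delta/2$ sits at or below $(a/n)(1 - 1/k^*)$ while the nearest non-FL offset $(k^*+1)\Delta/2$ sits at or above $a/n$, so nobody flips and $s^t = s^{t-1}$. For Case 2 ($k < k^*$), Assumption \ref{assumption: odd k} makes those inequalities strict, and $k+1 \le k^*$ gives $\Delta < \frac{2a}{nk^*} \le \frac{2a}{n(k+1)}$, hence the nearest non-FL offset $(k+1)\Delta/2 < a/n$ and at least one non-FL agent joins; similarly $\Delta < \frac{2a}{nk}$ pushes the extremal FL offset $(k-1)\Delta/2$ below $(a/n)(1 - 1/k)$, so no FL agent leaves.

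Case 3 ($k > k^*$) is symmetric on the first half: from $k \ge k^* + 1$ I get $\Delta > \frac{2a}{n(k^*+1)} \ge \frac{2a}{nk}$, so the extremal FL offset strictly exceeds its threshold (some FL agent leaves), while $\Delta > \frac{2a}{n(k+1)}$ keeps the nearest non-FL offset above $a/n$ (no non-FL agent joins). The main obstacle is pinning down $|\Omega^t| \in \{k^*-1,\,k^*,\,k^*+1\}$. For this I would set $\alpha := \frac{2a}{n\Delta}$ and $T := \frac{(k-1)a}{nk\Delta} = \frac{\alpha}{2}\bigl(1 - \frac{1}{k}\bigr)$, and count integer or half-integer offsets satisfying the staying condition, obtaining
\[
  |\Omega^t| \;=\; \begin{cases} 2\lfloor T \rfloor + 1, & k \text{ odd}, \\[2pt] 2\lfloor T + 1/2 \rfloor, & k \text{ even}. \end{cases}
\]
Assumption \ref{assumption: odd k} places $\alpha$ strictly inside $(k^*,\,k^*+1)$ with $k^*$ odd, so $T < \alpha/2 < (k^*+1)/2$ from above and $T \ge \frac{\alpha}{2}\cdot\frac{k^*}{k^*+1} > \frac{k^*-2}{2}$ from below. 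A short parity-based case analysis (noting that $|\Omega^t|$ inherits the parity of $k$) then yields $|\Omega^t| = k^*$ whenever $k$ is odd and $|\Omega^t| \in \{k^*-1,\,k^*+1\}$ whenever $k$ is even, completing the lemma. This discrete counting step, juggling the parities of $k$ and $k^*$ with the fractional part of $\alpha$, is the most delicate part of the argument.
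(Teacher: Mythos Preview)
Your proposal is correct and follows essentially the same route as the paper: convert Lemma \ref{lemma: deviation condition} into index-offset thresholds against the center of the consecutive block, then compare $k=|\Omega^{t-1}|$ to $k^*$ via the equilibrium sandwich $\frac{2a}{n(k^*+1)}\le\Delta\le\frac{2a}{nk^*}$ from Theorem \ref{thm: NE}. Two small remarks: your upfront consecutiveness claim implicitly relies on the fact that ``some FL agent leaves'' ($\alpha<k$) and ``some non-FL agent joins'' ($\alpha>k+1$) are mutually exclusive, which the paper instead handles inside each case; and in Case 3 the paper avoids your parity bookkeeping entirely by directly sandwiching $\tfrac{2a}{n}\bigl(1-\tfrac1k\bigr)$ between $(k^*-1)\Delta$ and $(k^*+1)\Delta$, which already yields $|\Omega^t|\in\{k^*-1,k^*,k^*+1\}$—your finer parity count is correct (and in fact sharper) but not needed for the lemma as stated.
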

\begin{proof}
    If $|\Omega^{t-1}| = k^*$ and $s^{t-1}$ forms a $k^*$-consecutive participation, we observe that $s^{t-1}$ is then an equilibrium strategy of the stage game $\mathcal{G}$. By the constructive idea of the myopic strategy, no agent benefits from unilaterally deviating, and thus no agent changes their strategy. Therefore, $s^{t} = s^{t-1}$.

    If $|\Omega^{t-1}| < k^*$, for any $i \in \Omega^{t-1}$, we have 
    $$|\mu_i - \bar{\mu}^{t-1}| \le \frac{|\Omega^{t-1}|-1}{2}\Delta \le \frac{|\Omega^{t-1}|-1}{2} \frac{2a}{k^*n} = \frac{a}{n}\frac{|\Omega^{t-1}|-1}{k^*} \le \frac{a}{n}\big(1-\frac{1}{|\Omega^{t-1}|}\big).$$
    The first inequality is because $\Omega^{t-1}$ is a $|\Omega^{t-1}|$-consecutive set, the second inequality comes from Theorem \ref{thm: NE} and Assumption \ref{assumption: odd k}, and the last inequality is due to $1 \le |\Omega^{t-1}| < k^*$. Thus, by Lemma \ref{lemma: deviation condition}, we see that no FL agent will leave the FL system if $|\Omega^{t-1}| < k^*$. Similarly, for $i \notin \Omega^{t-1}$, we have
    $$\min_{i \notin \Omega^{t-1}} |\mu_i - \bar{\mu}^{t-1}| = \frac{|\Omega^{t-1}|+1}{2}\Delta < \frac{|\Omega^{t-1}|+1}{2} \frac{2a}{k^*n} \le \frac{a}{n}.$$
    Therefore, by Lemma \ref{lemma: deviation condition}, there exists at least one non-FL agent who will opt in FL at stage $t$, and then we conclude $|\Omega^t| \ge |\Omega^{t-1}| + 1$. We also observe that, if a non-FL agent $j$ opts in FL, then so should all the other non-FL agents with indices between $j$ and any index in $\Omega^{t-1}$. Thus, $\Omega^t$ is also a $|\Omega^t|$-consecutive set.  

    If $|\Omega^{t-1}| > k^*$, we first show that no non-FL agent will opt in FL at stage $t$. This is trivial when $|\Omega^{t-1}| = m$. When $|\Omega^{t-1}| \neq m$, there exists non-FL agent at stage $t-1$. We then have
    $$\min_{i \notin \Omega^{t-1}} |\mu_i - \bar{\mu}^{t-1}| = \frac{|\Omega^{t-1}|+1}{2}\Delta > \frac{|\Omega^{t-1}|+1}{2} \frac{2a}{(k^*+1)n} > \frac{a}{n},$$
    where the first inequality is based on Theorem \ref{thm: NE} and the second inequality is due to $|\Omega^{t-1}| > k^*$. Hence, no non-FL agent will opt in FL at stage $t$. Since $s^{t-1}$ is not an equilibrium, at least one player benefits from unilaterally deviating. Based on the idea of the myopic strategy and the fact that no non-FL agent deviates, there has to be some FL agent(s) opting out of FL at stage $t$. We next show that $|\Omega^t| = k^*$ or $k^* \pm 1$. Based on Theorem \ref{thm: NE}, we have
    \begin{align}\label{eq: temp2}
        \begin{dcases}
            \frac{2a}{n}\big(1-\frac{1}{|\Omega^{t-1}|}\big) > \frac{2a}{n}\big(1-\frac{1}{k^*}\big) \ge k^*\Delta\big(1-\frac{1}{k^*}\big) = (k^*-1)\Delta,\\
            \frac{2a}{n}\big(1-\frac{1}{|\Omega^{t-1}|}\big) < \frac{2a}{n} < (k^*+1)\Delta.
        \end{dcases}
    \end{align}
    According to Lemma \ref{lemma: deviation condition}, a FL agent $i$ deviates if and only if $|\mu_i - \bar{\mu}^{t-1}| > \frac{a}{n}(1-\frac{1}{|\Omega^{t-1}|})$, which means a FL agent stays as a FL participant as long as 
    $$|\mu_i - \bar{\mu}^{t-1}| \le \frac{a}{n}(1-\frac{1}{|\Omega^{t-1}|}).$$ From Eqs. \eqref{eq: temp2}, we observe that at least $k^*-1$ and at most $k^*+1$ agents satisfy the preceding condition. Therefore, $|\Omega^t| = k^*$ or $k^* \pm 1$. The condition also indicates that $\Omega^t$ is a $|\Omega^t|$-consecutive set. 
\end{proof}

Assuming the strategy forms consecutive participation at some stage, Lemma \ref{lemma: myopic dynamic} shows how the strategy profile evolves from one stage to the next under the myopic strategy. Building on this, the next lemma extends the analysis to the long-term behavior of the system. Specifically, it characterizes the eventual outcome after multiple stages, showing that the strategy profiles converge to a neighborhood of Nash equilibrium of the stage game $\mathcal{G}$. Together, these results provide a step-by-step and long-term perspective on the dynamics of the myopic strategy. First, we define a {\textit{neighborhood of Nash equilibrium}} as follows,
\begin{definition}
    Let $s^*$ be a type 2 equilibrium strategy of the stage game $\mathcal{G}$ with exactly $k^*$ FL participants. We define the {neighborhood of equilibrium} as the set of strategy profiles that include not only $s^*$ but also {some} that differ from $s^*$ {by at most one agent's strategy},
    \begin{align*}
        \mathcal{N}(s^*) = s^* \cup \big\{s \in S: \text{$s$ forms a $(k^* \pm 1)$-consecutive participation and }||s-s^*||_1=1 \big\}.
    \end{align*}
\end{definition} 
That is, {a neighborhood of equilibrium} consists of strategy profiles where the number of consecutive FL participants is either $k^*$, $k^*-1$ or $k^*+1$, and $s$ differs from $s^*$ in exactly one agent's participation choice.

\begin{lemma}\label{lemma: convergence}
    Suppose all players conduct the myopic strategy. Assume $s^{t-1}$ form a $|\Omega^{t-1}|$-consecutive participation for some $|\Omega^{t-1}| > 0$. The strategy profile $s^t$ converges to {a neighborhood of equilibrium} $\mathcal{N}(s^*)$ in finite time, where $s^*$ is some type 2 equilibrium of the stage game $\mathcal{G}$.
\end{lemma}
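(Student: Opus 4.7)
The plan is to iterate Lemma \ref{lemma: myopic dynamic} until $|\Omega^\tau|$ enters the small band $\{k^*-1, k^*, k^*+1\}$, and then to explicitly track the short-time dynamics to pin down a single type 2 equilibrium $s^*$ whose neighborhood contains the tail of the trajectory.

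\textbf{Phase 1: Reaching the band.} I would split on $|\Omega^{t-1}|$. If $|\Omega^{t-1}| = k^*$, Lemma \ref{lemma: myopic dynamic}(1) says $s^{t-1}$ is itself a type 2 equilibrium and the trajectory stays there; take $s^* = s^{t-1}$ and we are done. If $|\Omega^{t-1}| > k^*$, Lemma \ref{lemma: myopic dynamic}(3) puts us in the band after one stage while preserving consecutive structure. If $|\Omega^{t-1}| < k^*$, Lemma \ref{lemma: myopic dynamic}(2) gives $|\Omega^{\tau+1}| \ge |\Omega^\tau|+1$ with consecutive structure preserved, so after at most $k^* - |\Omega^{t-1}|$ stages $|\Omega^\tau|$ reaches or exceeds $k^*$; either we land on $k^*$ (equilibrium), or we overshoot and fall back to the band via one more application of case (3). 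Hence there is a finite first time $T$ with $|\Omega^T| \in \{k^*-1, k^*, k^*+1\}$ and $\Omega^T$ consecutive.

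\textbf{Phase 2: Dynamics inside the band.} I would combine Lemma \ref{lemma: deviation condition} with Assumption \ref{assumption: odd k}, which (via $\frac{2a}{n\Delta}$ not being an integer) yields the strict chain $\frac{k^*}{2}\Delta < \frac{a}{n} < \frac{k^*+1}{2}\Delta$. If $|\Omega^T| = k^*$, done. If $|\Omega^T| = k^*+1$, say $\Omega^T = \{p, \dots, p+k^*\}$, the two edge FL agents both sit at distance $\frac{k^*}{2}\Delta$ from the mean, strictly exceeding $\frac{a}{n}(1-\frac{1}{k^*+1})$, so both leave simultaneously while the strictly-closer interior agents stay, giving $\Omega^{T+1} = \{p+1, \dots, p+k^*-1\}$. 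At stage $T+2$, the adjacent non-FL agents $p$ and $p+k^*$ each sit at distance $\frac{k^*}{2}\Delta < \frac{a}{n}$ and both rejoin, so $\Omega^{T+2} = \Omega^T$ and we enter a period-$2$ orbit. If $|\Omega^T| = k^*-1$ is interior, both adjacent non-FL agents join in one step and we reduce to the previous case; if instead $\Omega^T$ sits at an index boundary such as $\{1, \dots, k^*-1\}$, only the single available non-FL neighbor exists and joins, producing $|\Omega^{T+1}| = k^*$ and convergence to an exact equilibrium.

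\textbf{Phase 3: Identifying $s^*$.} In every stationary sub-case I would set $s^*$ equal to the reached equilibrium. In the period-$2$ oscillation between $\Omega_+ = \{p, \dots, p+k^*\}$ and $\Omega_- = \{p+1, \dots, p+k^*-1\}$, I would pick $s^* = \{p, \dots, p+k^*-1\}$, which is a type 2 equilibrium by Theorem \ref{thm: NE}. Each orbit state differs from $s^*$ by exactly one agent and forms a $(k^*\pm 1)$-consecutive participation, so both lie in $\mathcal{N}(s^*)$.

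The main obstacle I anticipate is the clean handling of the period-$2$ oscillation: one must verify that both edge FL agents leave together at stage $T+1$ and both non-FL neighbors join together at stage $T+2$, so that the orbit stays anchored to a \emph{single} $s^*$ rather than drifting across different consecutive equilibria. This relies on the strict inequalities from Assumption \ref{assumption: odd k} (which rule out the degenerate case of $\frac{2a}{n\Delta}$ being an integer, and hence edge-agent indifference) together with the left--right symmetry of edge distances around the center of mass of a consecutive participant set.
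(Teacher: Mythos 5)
Your proposal follows essentially the same route as the paper's proof: the paper's Cases (IV)--(V) are your Phase 1 (iterate Lemma \ref{lemma: myopic dynamic} into the band $\{k^*-1,k^*,k^*+1\}$), its Cases (I)--(III) are your Phase 2 (fixed point at $k^*$, period-2 orbit between $k^*+1$ and $k^*-1$, one-sided join at an index boundary), and your Phase 3 makes explicit the anchoring of the orbit to a single type 2 equilibrium $s^*$, which the paper states more tersely; the strict chain $\tfrac{k^*}{2}\Delta < \tfrac{a}{n} < \tfrac{k^*+1}{2}\Delta$ you extract from Assumption \ref{assumption: odd k} is exactly what the paper's inequalities rely on.

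The one point you miss is the degenerate case $k^*=1$: after the two edge agents of a $2$-consecutive participation leave, the participant set is empty, the hypothetical cost of joining equals $a/n$ exactly, and the inertia rule (strict improvement required) keeps every agent out, so the trajectory is absorbed at $s=0$ rather than entering your period-2 orbit; your claim that both edge agents ``rejoin'' at stage $T+2$ fails there. The lemma's conclusion still holds because $s=0$ forms a $(k^*-1)=0$-consecutive participation at $\ell_1$-distance $1$ from any single-participant type 2 equilibrium, hence lies in $\mathcal{N}(s^*)$; the paper treats this subcase explicitly in its Cases (II) and (III), and your argument is complete once this branch is added (for $k^*>1$ your verification of who leaves, who stays, and who rejoins is correct).
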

\begin{proof}
    We consider the following cases separately, based on different values of $|\Omega^{t-1}|$. 

    \paragraph{Case (I): $|\Omega^{t-1}| = k^*$}
    Based on Lemma \ref{lemma: myopic dynamic} and since we assume $s^{t-1}$ forms a $|\Omega^{t-1}|$-consecutive participation, we obtain $s^t = s^{t-1}$. By induction, there holds $s^{t-1} = s^t = s^{t+1} = \cdots = s^{t+T}$ for any $T \ge 0$. Thus, $s^t$ converges to an exact equilibrium of the stage game $\mathcal{G}$.

    \paragraph{Case (II): $|\Omega^{t-1}| = k^*+1$}
    Denote the FL agents with the smallest and largest index by $\ell$ and $r$, respectively. It is straightforward to see $\arg\max_{i \in \Omega^{t-1}} |\mu_i - \bar{\mu}^{t-1}| = \{\ell, r\}$. By Lemmas \ref{lemma: deviation condition} and \ref{lemma: myopic dynamic}, agents $\ell$ and $r$ will opt out of FL at stage $t$ while all the other agents keep their choices unchanged. Thus, $s^t$ forms a $|\Omega^t|$-consecutive participation with $|\Omega^t| = k^*-1$ and $\bar{\mu}^t = \bar{\mu}^{t-1}$. Note that $s^t$ is the same as $s^{t-1}$ except for two agents $\ell$ and $r$. 
    
    Furthermore, at stage $t+1$, if $k^* = 1$, then we have $s^t = 0$. According to the design of the myopic strategy, we then have $s^t = 0$ for any $t > 0$ thereafter. Thus, by definition, $s^t$ converges to a neighborhood of equilibrium. If $k^* > 1$, based on Lemma \ref{lemma: myopic dynamic} and the fact that $|\mu_\ell - \bar{\mu}^t| = |\mu_r - \bar{\mu}^t|$, agents $\ell$ and $r$ will rejoin FL again. Moreover, we have 
    $$\min_{i \notin \Omega^t \cup \{\ell, r\}} |\mu_i - \bar{\mu}^t| = \big(\frac{k^*-2}{2} + 2\big)\Delta > \frac{k^*+2}{2} \frac{2a}{(k^*+1)n} > \frac{a}{n}.$$
    Therefore, by Lemma \ref{lemma: deviation condition}, no non-FL agent other than $\ell$ and $r$ will opt in FL at stage $t+1$. Hence, $s^{t+1} = s^{t-1}$. Repeat the same arguments, we will have $s^{t-1} = s^{t+1} = \cdots = s^{t+T}$ for any odd $T > 0$ and $s^{t} = s^{t+2} = \cdots = s^{t+T}$ for any even $T > 0$. {Essentially, the strategy $s^{t-1}$ says all agents with indices in between $\ell$ and $r$ choose to opt in FL including agents $\ell$ and $r$, while others opt out. In contrast, the strategy $s^{t}$ says all agents with indices strictly between $\ell$ and $r$ choose to opt in FL, while others opt out. Since $|\Omega^{t-1}| = k^*+1$ and $|\Omega^t| = k^*-1$, we have $s^{t-1}, s^t \in \mathcal{N}(s^*)$ for some type 2 equilibrium $s^*$.} 
    Therefore, the strategy profile $s^t$ converges to a neighborhood of equilibrium. 

    \paragraph{Case (III): $|\Omega^{t-1}| = k^*-1$}
    If $k^* = 1$, we then have $s^t = 0$ for any $t > 0$ and thus $s^t$ converges to a neighborhood of equilibrium. For $k^* > 1$, pick $\ell \in \{1,2,...,m\}$ such that $s_\ell^{t-1} = 0$ but $s_{\ell+1}^{t-1} = 1$, if it exists, and pick $r \in \{1,2,...,m\}$ such that $s_r^{t-1} = 0$ but $s_{r-1}^{t-1} = 1$, if it exists. Since $|\Omega^{t-1}| = k^*-1$ and $\Omega^{t-1}$ is a $|\Omega^{t-1}|$-consecutive set, at least one of $\ell$ and $r$ exists. Following the arguments in Case (II), we can conclude that $s^t$ forms a $|\Omega^t|$-consecutive participation with $|\Omega^t| = k^*$ or $k^*+1$, which corresponds to Case (I) or Case (II) thereafter, respectively.

    \paragraph{Case (IV): $|\Omega^{t-1}| > k^*+1$}
    By Lemma \ref{lemma: myopic dynamic}, we have $|\Omega^{t}| = k^*$ or $k^* \pm 1$, and $s^{t}$ also forms a $|\Omega^t|$-consecutive participation. Thereafter, the strategy dynamics have been studied in Case (I), (II) and (III).

    \paragraph{Case (V): $|\Omega^{t-1}| < k^* - 1$}
    According to Lemma \ref{lemma: myopic dynamic} and by induction, $|\Omega^t|$ is strictly increasing until it becomes greater or equal to $k^*$. Thereafter, it is equivalent to Case (IV). 
\end{proof}

The preceding lemma has demonstrated the convergence of the myopic strategy when some special states are observed. Finally, the following theorem establishes the convergence result for any random initial state $s^0$.

\begin{theorem}
    Suppose all players conduct the myopic strategy. Given any initial strategy $s^0$ with $|\Omega^0| > 0$, the strategy profile across stages $s^t$ converges to {a neighborhood of equilibrium} in finite time. If $|\Omega^0| = 0$, then $s^t$ retains a type 1 equilibrium of the stage game $\mathcal{G}$, i.e., $s^t = 0$ for any $t > 0$.
\end{theorem}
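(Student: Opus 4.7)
The plan is to split on $|\Omega^0|$ and, in the nontrivial case, reduce to Lemma \ref{lemma: convergence} by showing that a consecutive participation is reached in finite time.

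If $|\Omega^0|=0$, I would induct on $t$ to show $\Omega^t=\emptyset$ for every $t\ge 0$. Suppose $\Omega^t=\emptyset$; then $\bar{\mu}^t=0$ by the algorithm's convention, and every agent has $s_i^t=0$. Each agent's hypothetical cost of unilaterally switching to $s_i=1$ is $\frac{a}{(0+1)n}+\bigl|\mu_i-\frac{0\cdot 0+\mu_i}{1}\bigr|=\frac{a}{n}$, which exactly equals the realized cost $\frac{a}{n}$. The strict-inequality test at line \ref{line:inertia} therefore fails, so $s^{t+1}=s^t$, and the induction closes.

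If $|\Omega^0|>0$ and $s^0$ already forms a consecutive participation, Lemma \ref{lemma: convergence} applies directly. Otherwise I would first characterize $\Omega^1$ via Lemma \ref{lemma: deviation condition}. Setting $r_1=\frac{a}{n}(1-\frac{1}{|\Omega^0|})$ and $r_2=\frac{a}{n}$, every index with $|\mu_i-\bar{\mu}^0|\le r_1$ lies in $\Omega^1$ (an FL agent stays and a non-FL agent joins since $r_1<r_2$), no index with $|\mu_i-\bar{\mu}^0|\ge r_2$ does, and in the annulus $r_1<|\mu_i-\bar{\mu}^0|<r_2$ exactly the stage-$0$ non-FL agents are in $\Omega^1$. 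Hence $\Omega^1$ contains a consecutive inner core and is trapped in the $r_2$-interval around $\bar{\mu}^0$. I would then iterate this analysis: bound how far $\bar{\mu}^t$ can drift (since $\Omega^t$ stays inside an interval of width at most $2r_2$), and reapply Lemma \ref{lemma: deviation condition} to argue that any interior gap of $\Omega^t$ sits strictly inside the $r_2$-neighborhood of $\bar{\mu}^t$ and is filled at the next stage, while any extreme FL participant beyond the relevant $r_1$-radius departs. After a bounded number of such cleanup rounds the participant set is consecutive, and Lemma \ref{lemma: convergence} then delivers $s^t\in\mathcal{N}(s^*)$ in finite additional time.

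The main obstacle is making the cleanup argument rigorous. Because all agents update simultaneously, annular gaps can coexist with FL outliers, and the mean $\bar{\mu}^t$ shifts with each synchronous move, so no single monotone scalar suffices. The route I would pursue is a potential function (for instance, the symmetric difference between $\Omega^t$ and the nearest consecutive window of comparable size), combined with the structural trapping above; the key technical work is verifying that this potential strictly decreases at every stage containing a gap or an outlier, and that no new gap is created by departing FL agents. Once this monotonicity is established, reduction to Lemma \ref{lemma: convergence} is immediate and the theorem follows.
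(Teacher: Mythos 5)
Your treatment of the $|\Omega^0|=0$ case is correct and matches the paper, and your reduction plan (reach a consecutive participation, then invoke Lemma \ref{lemma: convergence}) is exactly the right skeleton. However, the heart of the theorem is precisely the step you leave open: proving that from an arbitrary, non-consecutive $\Omega^0>0$ the dynamics reach a consecutive participation in finite time. You describe a candidate potential function (the symmetric difference between $\Omega^t$ and a nearest consecutive window) and explicitly flag that its strict decrease under simultaneous updates and mean drift is ``the key technical work'' still to be verified. That is a genuine gap, not a routine detail: because departing FL agents in the outer annulus and joining non-FL agents act at the same stage, and because $\bar{\mu}^t$ moves with every synchronous update, it is not at all evident that your proposed potential is monotone, and you give no argument ruling out new gaps or oscillation of the symmetric difference. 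As written, the proposal proves the theorem only conditionally on an unestablished monotonicity claim.

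For comparison, the paper closes this step without any such potential, using a two-regime argument driven by the equilibrium size $k^*$. When $2 \le |\Omega^{t-1}| \le \frac{k^*-1}{2}$, the stay/join radius satisfies $\frac{a}{n}\bigl(1-\frac{1}{|\Omega^{t-1}|}\bigr) \ge \frac{a}{2n} > \frac{k^*\Delta}{4}$ (by Theorem \ref{thm: NE}), so every agent whose mean lies within $\frac{k^*\Delta}{4}$ of $\bar{\mu}^{t-1}$ is in $\Omega^t$; counting how many means fit in that interval (with boundary cases at $\mu_1$ and $\mu_m$) yields $|\Omega^t| \ge |\Omega^{t-1}|+1$, so the simple scalar $|\Omega^t|$ is strictly increasing in this regime. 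Once $|\Omega^{t-1}| \ge \frac{k^*+1}{2}$, the gap between the join threshold $\frac{a}{n}$ and the stay threshold $\frac{a}{n}\bigl(1-\frac{1}{|\Omega^{t-1}|}\bigr)$ is smaller than $\Delta$, so the annulus you identify is too thin to contain more than one mean per side and $\Omega^t$ is forced to be consecutive in a single step; consecutiveness then persists and Lemma \ref{lemma: convergence} applies, with the whole pre-consecutive phase lasting at most about $\frac{k^*+1}{2}$ stages. If you want to salvage your route, the lesson is that you do not need a delicate set-valued potential at all: the participant count itself works while the set is small, and for large sets the thin-annulus observation terminates the cleanup immediately.
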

\begin{proof}
    When $|\Omega^0| = 0$, no one benefits by unilaterally joining the FL. Thus $s^t = 0$ for any $t > 0$.

    For $|\Omega^0| > 0$, first, we establish that there exists one stage $T$ such that $s^T$ forms a $|\Omega^T|$-consecutive participation. To do this, consider any stage $t > 0$ with the given information $|\Omega^{t-1}|$ and $\bar{\mu}^{t-1}$. Assume $\Omega^{t-1}$ is not a $|\Omega^{t-1}|$-consecutive set, then $|\Omega^{t-1}| \ge 2$.

    When $2 \le |\Omega^{t-1}| \le \frac{k^*-1}{2}$, without loss of generality, we may assume $\mu_1 = \Delta$. Since $\Omega^{t-1}$ is not a $|\Omega^{t-1}|$-consecutive set, there holds $\frac{|\Omega^{t-1}|+1}{2}\Delta < \bar{\mu}^{t-1} < (m-\frac{|\Omega^{t-1}|+1}{2})\Delta$. Moreover, we have 
    \begin{align}\label{eq: temp5}
        \frac{a}{n}\big(1-\frac{1}{|\Omega^{t-1}|}\big) \ge \frac{a}{n}\big(1-\frac{1}{2}\big) = \frac{a}{2n} > \frac{k^*}{4}\Delta,
    \end{align}
    where the last inequality is based on Theorem \ref{thm: NE}. From Lemma \ref{lemma: deviation condition}, we know that any agent $i$ satisfying $|\mu_i - \bar{\mu}^{t-1}| \le \frac{a}{n}\big(1-\frac{1}{|\Omega^{t-1}|}\big)$ will participate in the FL at stage $t$. Given the bound in Eq. \eqref{eq: temp5}, it can be seen that $s_i^t = 1$ if agent $i$ has the distribution mean $\mu_i$ satisfying
    \begin{align}\label{eq: temp6}
        \mu_i \in [\bar{\mu}^{t-1}-\frac{k^*\Delta}{4}, \bar{\mu}^{t-1}+\frac{k^*\Delta}{4}].
    \end{align}
    Depending on the value of $\bar{\mu}^{t-1}$, the number of agents satisfying the preceding condition may vary. Thus, we discuss the following cases separately. 
    
    \paragraph{Case (I): $\mu_1 \le \bar{\mu}^{t-1}-\frac{k^*\Delta}{4}$ and $\mu_m \ge \bar{\mu}^{t-1}+\frac{k^*\Delta}{4}$}
    In this case, since the data separation $\Delta = \mu_m - \mu_{m-1} = \cdots = \mu_2 - \mu_1$, the number of agents satisfying Eq. \eqref{eq: temp6} is at least 
    \begin{align*}
        \floor*{\frac{(\bar{\mu}^{t-1}+\frac{k^*\Delta}{4}) - (\bar{\mu}^{t-1}-\frac{k^*\Delta}{4})}{\Delta}} = \floor*{\frac{k^*}{2}}.
    \end{align*}

    \paragraph{Case (II): $\mu_1 > \bar{\mu}^{t-1}-\frac{k^*\Delta}{4}$ and $\mu_m \ge \bar{\mu}^{t-1}+\frac{k^*\Delta}{4}$}
    In this case, all agents with indices between 1 to $\floor{\frac{\bar{\mu}^{t-1}+\frac{k^*\Delta}{4}}{\Delta}}$ satisfy Eq. \eqref{eq: temp6}. Thus, the number of agents is precisely $\floor{\frac{\bar{\mu}^{t-1}+\frac{k^*\Delta}{4}}{\Delta}}$.

    \paragraph{Case (III): $\mu_1 \le \bar{\mu}^{t-1}-\frac{k^*\Delta}{4}$ and $\mu_m < \bar{\mu}^{t-1}+\frac{k^*\Delta}{4}$}
    This case is similar to case (II), thus the number of agents satisfying Eq. \eqref{eq: temp6} can be obtained as $m - \floor{\frac{\bar{\mu}^{t-1}-\frac{k^*\Delta}{4}}{\Delta}} + 1$.

    \paragraph{Case (IV): $\mu_1 > \bar{\mu}^{t-1}-\frac{k^*\Delta}{4}$ and $\mu_m < \bar{\mu}^{t-1}+\frac{k^*\Delta}{4}$}
    In this case, all agents satisfy Eq. \eqref{eq: temp6}, and thus they all opt in FL at stage $t$, making $s^t$ form a consecutive participation. 
    
    Here, we consider the non-trivial case where $\Omega^t$ is not a $|\Omega^t|$-consecutive set. Thus, there needs to be at least one FL agent $i$ at stage $t$ who does not satisfy $|\mu_i - \bar{\mu}^{t-1}| \le \frac{a}{n}\big(1-\frac{1}{|\Omega^{t-1}|}\big)$. Therefore, the number of FL participants at stage $t$ is at least
    $$|\Omega^t| \ge \min\Big\{ \floor*{\frac{k^*}{2}}, \floor*{\frac{\bar{\mu}^{t-1}+\frac{k^*\Delta}{4}}{\Delta}}, m - \floor*{\frac{\bar{\mu}^{t-1}-\frac{k^*\Delta}{4}}{\Delta}} + 1 \Big\} + 1.$$
   
    By Assumption \ref{assumption: odd k} and $2 \le |\Omega^{t-1}| \le \frac{k^*-1}{2}$, we have $\floor{\frac{k^*}{2}} = \frac{k^*-1}{2}$. Since $\frac{|\Omega^{t-1}|+1}{2}\Delta < \bar{\mu}^{t-1} < (m-\frac{|\Omega^{t-1}|+1}{2})\Delta$, we can bound the second term as
    $$\floor*{\frac{\bar{\mu}^{t-1}+\frac{k^*}{4}\Delta}{\Delta}} \ge \floor*{\frac{|\Omega^{t-1}|+1}{2} + \frac{2|\Omega^{t-1}|+1}{4}} = |\Omega^{t-1}|.$$
    Similarly, the third term can be lower bounded as well,
    $$m - \floor*{\frac{\bar{\mu}^{t-1}-\frac{k^*\Delta}{4}}{\Delta}} + 1 \ge \floor*{\frac{|\Omega^{t-1}|+1}{2} + \frac{2|\Omega^{t-1}|+1}{4}} = |\Omega^{t-1}|.$$
    Then, we obtain
    $$|\Omega^t| \ge \min\Big\{ \frac{k^*-1}{2}, |\Omega^{t-1}| \Big\} + 1 \ge |\Omega^{t-1}|+1.$$
    Thus, the number of FL participants is strictly increasing until $|\Omega^T| \ge \frac{k^*+1}{2}$ for some $T$, or $|\Omega^T|$ becomes a $|\Omega^T|$-consecutive set. 

    When $|\Omega^{t-1}| \ge \frac{k^*+1}{2}$, there holds
    $$\frac{a}{n}\big(1-\frac{1}{|\Omega^{t-1}|}\big) \ge \frac{a}{n}\big(1-\frac{2}{k^*+1}\big) = \frac{a}{n} - \frac{2a}{(k^*+1)n} > \frac{a}{n} - \Delta,$$
    where the last inequality is because of Theorem \ref{thm: NE}. Based on Lemma \ref{lemma: deviation condition}, we observe that $\Omega^t$ must be a $|\Omega^t|$-consecutive set. 

    Therefore, there exists a stage $T$ such that $s^T$ forms a $|\Omega^T|$-consecutive participation and $T \le \frac{k^*+1}{2}$. Once a consecutive participation is formed, we apply Lemma \ref{lemma: convergence} and the proof is then complete. 
\end{proof}

\begin{figure}[htbp]
    \centering
    \begin{subfigure}{\textwidth}
        \begin{subfigure}[b]{0.48\textwidth}
            \includegraphics[width=\textwidth]{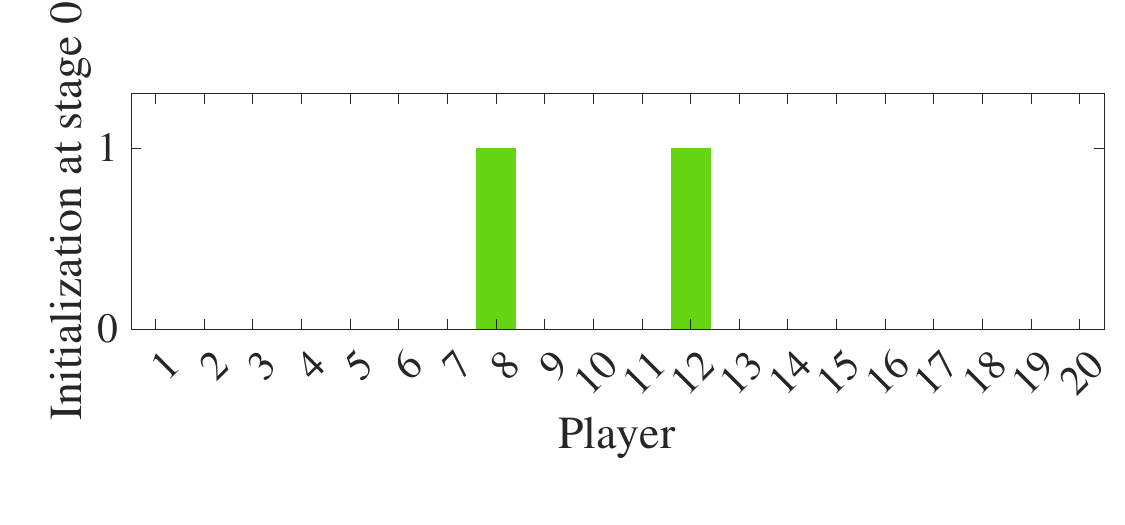}
        \end{subfigure}
        \hfill
        \begin{subfigure}[b]{0.48\textwidth}
            \includegraphics[width=\textwidth]{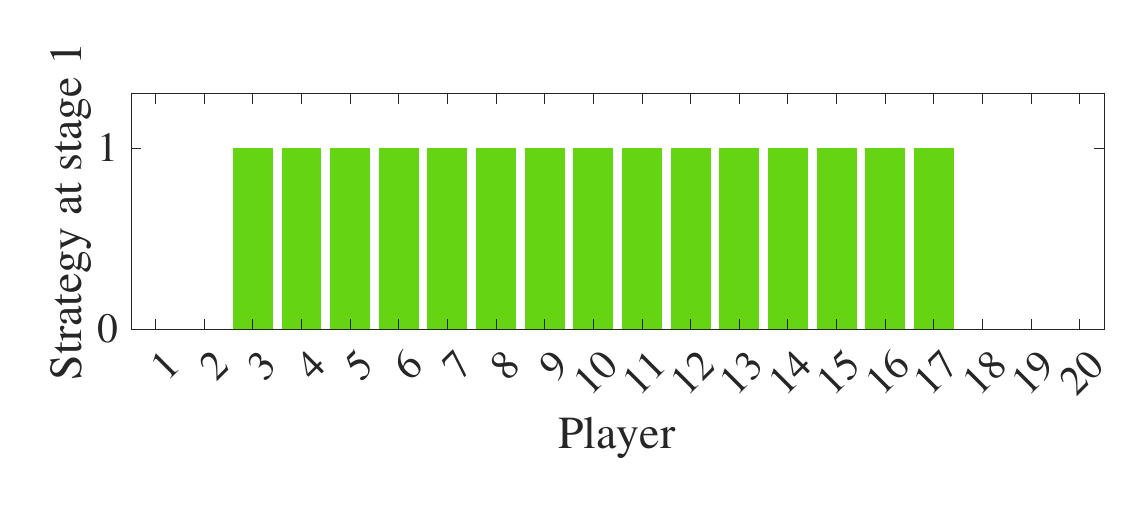}
        \end{subfigure}
        \vspace{-0.5cm}
        \subcaption{sparse initialization}
        \label{fig: sparse}
    \end{subfigure}
    
    \begin{subfigure}{\textwidth}
        \begin{subfigure}[b]{0.48\textwidth}
            \includegraphics[width=\textwidth]{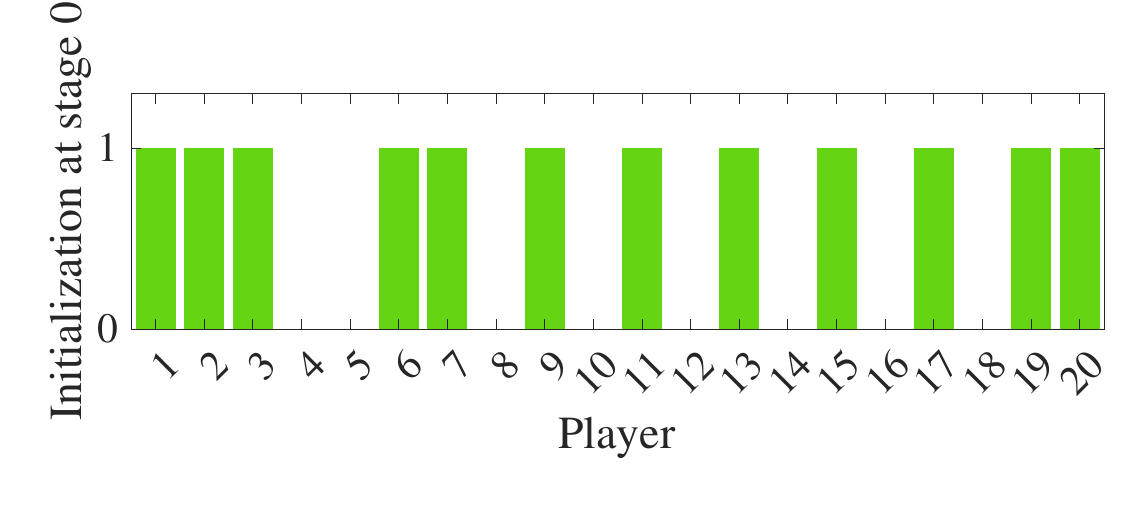}
        \end{subfigure}
        \hfill
        \begin{subfigure}[b]{0.48\textwidth}
            \includegraphics[width=\textwidth]{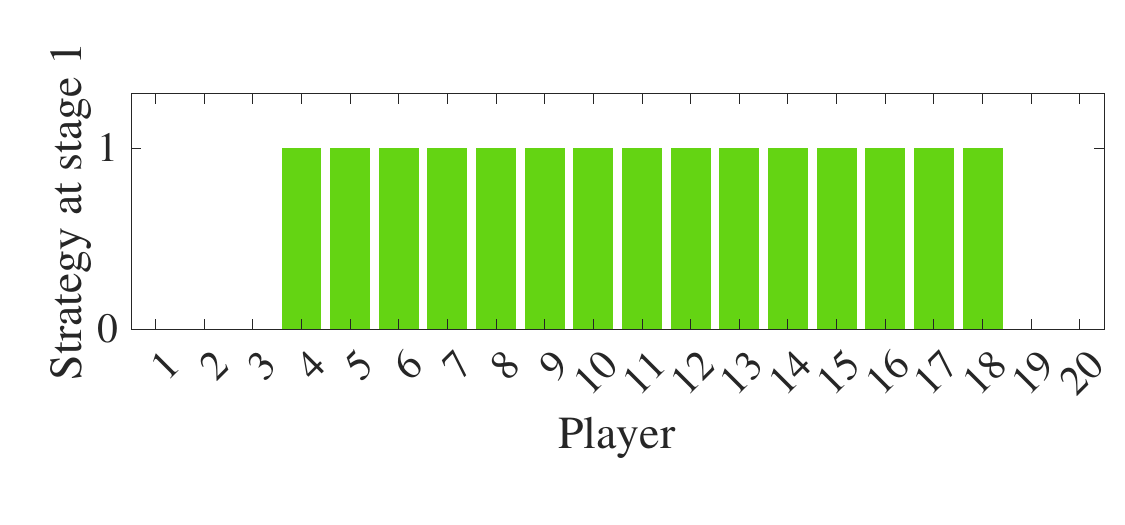}
        \end{subfigure}
        \vspace{-0.5cm}
        \subcaption{dense initialization}
        \label{fig: dense}
    \end{subfigure}
    \caption{Examples of the myopic strategy dynamics with two different initializations. We use parameters $m=20$, $n=100$, $a=790$. The number of FL participants of a type 2 equilibrium is $k^*=15$.}
    \label{fig: myopic}
\end{figure}

Fig. \ref{fig: myopic} illustrates the dynamics of the myopic strategy under two different initializations. The $x$-axis represents the players, indexed from 1 to 20. The $y$-axis shows whether or not each player opts in FL, where a bar of height 1 indicates participation in FL, and 0 indicates non-participation. Fig. \ref{fig: sparse} starts with a sparse initialization at stage 0 with only a few players opting in FL, while Fig. \ref{fig: dense} begins with a more evenly distributed initialization with many players opting in. Both examples demonstrate the efficiency of the myopic strategy in driving the system toward an equilibrium state. The fast convergence highlights its practical applicability in real-world scenarios.

\paragraph{Privacy-preserving design} One of the key advantages of the proposed myopic strategy is its inherent privacy-preserving design, which aligns seamlessly with the principles of federated learning. The strategy only relies on limited and aggregated information broadcast by the central server---specifically, the number of FL participants in the previous stage and the mean distribution of all participants' data. Unlike approaches such as the subgame-perfect equilibrium \cite{fudenberg_1991_game}, this strategy ensures that individual agents are not required to disclose sensitive information, such as private data distributions or realized costs. Therefore, data privacy can be maintained, and meanwhile, the strategy drives the system close to an equilibrium in finite time. 

\paragraph{Decentralized decision-making} Another notable aspect of the myopic strategy is its decentralized decision-making process, which minimizes the need for coordination. In this approach, agents update their strategies simultaneously at each stage. This is particularly important in federated learning, where a central coordinator may not be feasible. In contrast, sequential best-response dynamics \cite{best_response} enforce one and only one agent to change his/her strategy at each stage and others must wait. Despite its well-studied convergence properties \cite{BR_convergence}, this setting has two main drawbacks. First, a coordinator is required to schedule the moves of every agent. Second, the convergence process may be significantly slowed down, and inefficiencies may occur. 

\paragraph{Efficiency and practical convergence}
The myopic strategy is computationally and memory efficient. By using simple updating rules based on local cost observations and limited system information, the strategy converges to a neighborhood of equilibrium in finite time. This makes it feasible for real-world applications, particularly in resource-constrained federated learning environments \cite{Imteaj}. Moreover, empirical observations (e.g., Fig. \ref{fig: myopic}) show that the strategy exhibits fast convergence in practice, this further validates its efficiency. 

\paragraph{Bounded Rationality} The myopic strategy accounts for bounded rationality \cite{bounded_rationality}, since it requires only simple decision-making from agents rather than complex reasoning processes. A fully rational agent, for instance, might prefer strategies such as Bayesian updates, which involve calculating and updating probabilistic beliefs about the actions of other players based on all past observations \cite{Brandt, wu2021multiagentbayesianlearningbest}. Though such an approach can theoretically lead to better outcomes, it imposes significant computation power and cognitive abilities \cite{JasonMarden}, which makes it impractical in some large-scale, decentralized systems like federated learning with humans involved. In contrast, the myopic strategy replies on straightforward cost comparisons and uses limited aggregated information. Moreover, beyond proposing a simple yet efficient strategy, we also seek to capture how actual human agents might behave in a multi-stage federated learning scenario. By embracing bounded rationality, the myopic strategy provides a more realistic and practical approach that characterizes human decision-making in real-world applications.

\section{Conclusion}
This work investigates the fundamental problem of FL participation, with a focus on the impact of heterogeneous data qualities. By modeling FL participation as a stage game, we find that Nash equilibrium exists and coincides with the social welfare maximum strategy. Extending the framework into a repeated game, we analyze long-term FL participation dynamics. We propose a myopic strategy that efficiently guides the system close to an equilibrium of the stage game in finite time, and meanwhile preserves privacy and requires minimal computational resources. It highlight the feasibility of achieving stable and socially optimal outcomes in decentralized FL environments. In future works, we aim to consider a generalized payoff/cost structure non-homogeneous data separation to accommodate real-world settings. 

\bibliographystyle{ACM-Reference-Format-num}
\bibliography{sample-bibliography}


\begin{thebibliography}{33}


\ifx \showCODEN    \undefined \def \showCODEN     #1{\unskip}     \fi
\ifx \showDOI      \undefined \def \showDOI       #1{#1}\fi
\ifx \showISBNx    \undefined \def \showISBNx     #1{\unskip}     \fi
\ifx \showISBNxiii \undefined \def \showISBNxiii  #1{\unskip}     \fi
\ifx \showISSN     \undefined \def \showISSN      #1{\unskip}     \fi
\ifx \showLCCN     \undefined \def \showLCCN      #1{\unskip}     \fi
\ifx \shownote     \undefined \def \shownote      #1{#1}          \fi
\ifx \showarticletitle \undefined \def \showarticletitle #1{#1}   \fi
\ifx \showURL      \undefined \def \showURL       {\relax}        \fi
\providecommand\bibfield[2]{#2}
\providecommand\bibinfo[2]{#2}
\providecommand\natexlab[1]{#1}
\providecommand\showeprint[2][]{arXiv:#2}

\bibitem[\protect\citeauthoryear{Antunes, Andr\'{e}~da Costa, K\"{u}derle, Yari, and Eskofier}{Antunes et~al\mbox{.}}{2022}]%
        {FL_healthcare2}
\bibfield{author}{\bibinfo{person}{Rodolfo~Stoffel Antunes}, \bibinfo{person}{Cristiano Andr\'{e}~da Costa}, \bibinfo{person}{Arne K\"{u}derle}, \bibinfo{person}{Imrana~Abdullahi Yari}, {and} \bibinfo{person}{Bj\"{o}rn Eskofier}.} \bibinfo{year}{2022}\natexlab{}.
\newblock \showarticletitle{Federated Learning for Healthcare: Systematic Review and Architecture Proposal}.
\newblock \bibinfo{journal}{\emph{ACM Trans. Intell. Syst. Technol.}} \bibinfo{volume}{13}, \bibinfo{number}{4}, Article \bibinfo{articleno}{54} (\bibinfo{date}{May} \bibinfo{year}{2022}), \bibinfo{numpages}{23}~pages.
\newblock
\showISSN{2157-6904}


\bibitem[\protect\citeauthoryear{Bi, Gupta, and Yang}{Bi et~al\mbox{.}}{2024}]%
        {bi2024understanding}
\bibfield{author}{\bibinfo{person}{Xuan Bi}, \bibinfo{person}{Alok Gupta}, {and} \bibinfo{person}{Mochen Yang}.} \bibinfo{year}{2024}\natexlab{}.
\newblock \showarticletitle{Understanding Partnership Formation and Repeated Contributions in Federated Learning: An Analytical Investigation}.
\newblock \bibinfo{journal}{\emph{Management Science}} \bibinfo{volume}{70}, \bibinfo{number}{8} (\bibinfo{year}{2024}), \bibinfo{pages}{4974--4994}.
\newblock


\bibitem[\protect\citeauthoryear{Brandt, Fischer, and Harrenstein}{Brandt et~al\mbox{.}}{2010}]%
        {Brandt}
\bibfield{author}{\bibinfo{person}{Felix Brandt}, \bibinfo{person}{Felix Fischer}, {and} \bibinfo{person}{Paul Harrenstein}.} \bibinfo{year}{2010}\natexlab{}.
\newblock \showarticletitle{On the Rate of Convergence of Fictitious Play}. In \bibinfo{booktitle}{\emph{Algorithmic Game Theory}}. \bibinfo{publisher}{Springer Berlin Heidelberg}, \bibinfo{address}{Berlin, Heidelberg}, \bibinfo{pages}{102--113}.
\newblock


\bibitem[\protect\citeauthoryear{Collins, Hassani, Mokhtari, and Shakkottai}{Collins et~al\mbox{.}}{2022}]%
        {collins2022}
\bibfield{author}{\bibinfo{person}{Liam Collins}, \bibinfo{person}{Hamed Hassani}, \bibinfo{person}{Aryan Mokhtari}, {and} \bibinfo{person}{Sanjay Shakkottai}.} \bibinfo{year}{2022}\natexlab{}.
\newblock \showarticletitle{Fedavg with fine tuning: Local updates lead to representation learning}.
\newblock \bibinfo{journal}{\emph{Advances in Neural Information Processing Systems}}  \bibinfo{volume}{35} (\bibinfo{year}{2022}), \bibinfo{pages}{10572--10586}.
\newblock


\bibitem[\protect\citeauthoryear{Dayan, Roth, Zhong, Harouni, Gentili, et~al\mbox{.}}{Dayan et~al\mbox{.}}{2021}]%
        {FL_healthcare}
\bibfield{author}{\bibinfo{person}{Ittai Dayan}, \bibinfo{person}{Holger~R. Roth}, \bibinfo{person}{Aoxiao Zhong}, \bibinfo{person}{Ahmed Harouni}, \bibinfo{person}{Amilcare Gentili}, {et~al\mbox{.}}} \bibinfo{year}{2021}\natexlab{}.
\newblock \showarticletitle{Federated learning for predicting clinical outcomes in patients with COVID-19}.
\newblock \bibinfo{journal}{\emph{Nature Medicine}}  \bibinfo{volume}{27} (\bibinfo{date}{09} \bibinfo{year}{2021}), \bibinfo{pages}{1735--1743}.
\newblock


\bibitem[\protect\citeauthoryear{Donahue and Kleinberg}{Donahue and Kleinberg}{2021}]%
        {Donahue_Kleinberg_2021}
\bibfield{author}{\bibinfo{person}{Kate Donahue} {and} \bibinfo{person}{Jon Kleinberg}.} \bibinfo{year}{2021}\natexlab{}.
\newblock \showarticletitle{Model-sharing Games: Analyzing Federated Learning Under Voluntary Participation}.
\newblock \bibinfo{journal}{\emph{Proceedings of the AAAI Conference on Artificial Intelligence}} \bibinfo{volume}{35}, \bibinfo{number}{6} (\bibinfo{date}{May} \bibinfo{year}{2021}), \bibinfo{pages}{5303--5311}.
\newblock


\bibitem[\protect\citeauthoryear{Fallah, Mokhtari, and Ozdaglar}{Fallah et~al\mbox{.}}{2021}]%
        {fallah2021generalization}
\bibfield{author}{\bibinfo{person}{Alireza Fallah}, \bibinfo{person}{Aryan Mokhtari}, {and} \bibinfo{person}{Asuman Ozdaglar}.} \bibinfo{year}{2021}\natexlab{}.
\newblock \showarticletitle{Generalization of model-agnostic meta-learning algorithms: Recurring and unseen tasks}.
\newblock \bibinfo{journal}{\emph{Advances in Neural Information Processing Systems}}  \bibinfo{volume}{34} (\bibinfo{year}{2021}), \bibinfo{pages}{5469--5480}.
\newblock


\bibitem[\protect\citeauthoryear{Fang, Cao, Jia, and Gong}{Fang et~al\mbox{.}}{2020}]%
        {MinghongLocal2020}
\bibfield{author}{\bibinfo{person}{Minghong Fang}, \bibinfo{person}{Xiaoyu Cao}, \bibinfo{person}{Jinyuan Jia}, {and} \bibinfo{person}{Neil Gong}.} \bibinfo{year}{2020}\natexlab{}.
\newblock \showarticletitle{Local Model Poisoning Attacks to {Byzantine-Robust} Federated Learning}. In \bibinfo{booktitle}{\emph{29th USENIX Security Symposium (USENIX Security 20)}}. \bibinfo{publisher}{USENIX Association}, \bibinfo{pages}{1605--1622}.
\newblock


\bibitem[\protect\citeauthoryear{Fudenberg and Tirole}{Fudenberg and Tirole}{1991}]%
        {fudenberg_1991_game}
\bibfield{author}{\bibinfo{person}{Drew Fudenberg} {and} \bibinfo{person}{Jean Tirole}.} \bibinfo{year}{1991}\natexlab{}.
\newblock \bibinfo{booktitle}{\emph{Game theory}}.
\newblock \bibinfo{publisher}{Mit Press}.
\newblock


\bibitem[\protect\citeauthoryear{Hard, Rao, Mathews, Ramaswamy, Beaufays, et~al\mbox{.}}{Hard et~al\mbox{.}}{2019}]%
        {keyboard}
\bibfield{author}{\bibinfo{person}{Andrew Hard}, \bibinfo{person}{Kanishka Rao}, \bibinfo{person}{Rajiv Mathews}, \bibinfo{person}{Swaroop Ramaswamy}, \bibinfo{person}{Françoise Beaufays}, {et~al\mbox{.}}} \bibinfo{year}{2019}\natexlab{}.
\newblock \bibinfo{title}{Federated Learning for Mobile Keyboard Prediction}.
\newblock
\newblock
\showeprint[arxiv]{cs.CL/1811.03604}


\bibitem[\protect\citeauthoryear{Heinrich, Jang, Mungo, Pangallo, Scott, et~al\mbox{.}}{Heinrich et~al\mbox{.}}{2023}]%
        {best_response}
\bibfield{author}{\bibinfo{person}{Torsten Heinrich}, \bibinfo{person}{Yoojin Jang}, \bibinfo{person}{Luca Mungo}, \bibinfo{person}{Marco Pangallo}, \bibinfo{person}{Alex Scott}, {et~al\mbox{.}}} \bibinfo{year}{2023}\natexlab{}.
\newblock \showarticletitle{Best-response dynamics, playing sequences, and convergence to equilibrium in random games}.
\newblock \bibinfo{journal}{\emph{International Journal of Game Theory}}  \bibinfo{volume}{52} (\bibinfo{date}{06} \bibinfo{year}{2023}), \bibinfo{pages}{703--735}.
\newblock


\bibitem[\protect\citeauthoryear{Imteaj, Thakker, Wang, Li, and Amini}{Imteaj et~al\mbox{.}}{2022}]%
        {Imteaj}
\bibfield{author}{\bibinfo{person}{Ahmed Imteaj}, \bibinfo{person}{Urmish Thakker}, \bibinfo{person}{Shiqiang Wang}, \bibinfo{person}{Jian Li}, {and} \bibinfo{person}{M.~Hadi Amini}.} \bibinfo{year}{2022}\natexlab{}.
\newblock \showarticletitle{A Survey on Federated Learning for Resource-Constrained IoT Devices}.
\newblock \bibinfo{journal}{\emph{IEEE Internet of Things Journal}} \bibinfo{volume}{9}, \bibinfo{number}{1} (\bibinfo{year}{2022}), \bibinfo{pages}{1--24}.
\newblock


\bibitem[\protect\citeauthoryear{Kang, Zhu, Wang, Han, and Başar}{Kang et~al\mbox{.}}{2024}]%
        {Kang2024}
\bibfield{author}{\bibinfo{person}{Yuhan Kang}, \bibinfo{person}{Yifei Zhu}, \bibinfo{person}{Dan Wang}, \bibinfo{person}{Zhu Han}, {and} \bibinfo{person}{Tamer Başar}.} \bibinfo{year}{2024}\natexlab{}.
\newblock \showarticletitle{Joint Server Selection and Handover Design for Satellite-Based Federated Learning Using Mean-Field Evolutionary Approach}.
\newblock \bibinfo{journal}{\emph{IEEE Transactions on Network Science and Engineering}} \bibinfo{volume}{11}, \bibinfo{number}{2} (\bibinfo{year}{2024}), \bibinfo{pages}{1655--1667}.
\newblock


\bibitem[\protect\citeauthoryear{Li, Huang, Yang, Wang, and Zhang}{Li et~al\mbox{.}}{2020}]%
        {li2019}
\bibfield{author}{\bibinfo{person}{Xiang Li}, \bibinfo{person}{Kaixuan Huang}, \bibinfo{person}{Wenhao Yang}, \bibinfo{person}{Shusen Wang}, {and} \bibinfo{person}{Zhihua Zhang}.} \bibinfo{year}{2020}\natexlab{}.
\newblock \bibinfo{title}{On the Convergence of FedAvg on Non-IID Data}.
\newblock
\newblock
\showeprint[arxiv]{stat.ML/1907.02189}
\urldef\tempurl%
\url{https://arxiv.org/abs/1907.02189}
\showURL{%
\tempurl}


\bibitem[\protect\citeauthoryear{Lin, Kong, Stich, and Jaggi}{Lin et~al\mbox{.}}{2020}]%
        {lin2020ensemble}
\bibfield{author}{\bibinfo{person}{Tao Lin}, \bibinfo{person}{Lingjing Kong}, \bibinfo{person}{Sebastian~U Stich}, {and} \bibinfo{person}{Martin Jaggi}.} \bibinfo{year}{2020}\natexlab{}.
\newblock \showarticletitle{Ensemble distillation for robust model fusion in federated learning}.
\newblock \bibinfo{journal}{\emph{Advances in neural information processing systems}}  \bibinfo{volume}{33} (\bibinfo{year}{2020}), \bibinfo{pages}{2351--2363}.
\newblock


\bibitem[\protect\citeauthoryear{Marden, Arslan, and Shamma}{Marden et~al\mbox{.}}{2009}]%
        {JasonMarden}
\bibfield{author}{\bibinfo{person}{Jason~R. Marden}, \bibinfo{person}{GÜrdal Arslan}, {and} \bibinfo{person}{Jeff~S. Shamma}.} \bibinfo{year}{2009}\natexlab{}.
\newblock \showarticletitle{Joint Strategy Fictitious Play With Inertia for Potential Games}.
\newblock \bibinfo{journal}{\emph{IEEE Trans. Automat. Control}} \bibinfo{volume}{54}, \bibinfo{number}{2} (\bibinfo{year}{2009}), \bibinfo{pages}{208--220}.
\newblock


\bibitem[\protect\citeauthoryear{McMahan, Moore, Ramage, Hampson, and Arcas}{McMahan et~al\mbox{.}}{2017}]%
        {mcmahan2017}
\bibfield{author}{\bibinfo{person}{Brendan McMahan}, \bibinfo{person}{Eider Moore}, \bibinfo{person}{Daniel Ramage}, \bibinfo{person}{Seth Hampson}, {and} \bibinfo{person}{Blaise Aguera~y Arcas}.} \bibinfo{year}{2017}\natexlab{}.
\newblock \showarticletitle{{Communication-Efficient Learning of Deep Networks from Decentralized Data}}. In \bibinfo{booktitle}{\emph{Proceedings of the 20th International Conference on Artificial Intelligence and Statistics}} \emph{(\bibinfo{series}{Proceedings of Machine Learning Research})}, Vol.~\bibinfo{volume}{54}. \bibinfo{publisher}{PMLR}, \bibinfo{pages}{1273--1282}.
\newblock


\bibitem[\protect\citeauthoryear{Mohri, Sivek, and Suresh}{Mohri et~al\mbox{.}}{2019}]%
        {mohri2019agnostic}
\bibfield{author}{\bibinfo{person}{Mehryar Mohri}, \bibinfo{person}{Gary Sivek}, {and} \bibinfo{person}{Ananda~Theertha Suresh}.} \bibinfo{year}{2019}\natexlab{}.
\newblock \showarticletitle{Agnostic Federated Learning}. In \bibinfo{booktitle}{\emph{Proceedings of the 36th International Conference on Machine Learning}} \emph{(\bibinfo{series}{Proceedings of Machine Learning Research})}, Vol.~\bibinfo{volume}{97}. \bibinfo{publisher}{PMLR}, \bibinfo{pages}{4615--4625}.
\newblock


\bibitem[\protect\citeauthoryear{Monderer and Shapley}{Monderer and Shapley}{1996}]%
        {BR_convergence}
\bibfield{author}{\bibinfo{person}{Dov Monderer} {and} \bibinfo{person}{Lloyd~S. Shapley}.} \bibinfo{year}{1996}\natexlab{}.
\newblock \showarticletitle{Potential Games}.
\newblock \bibinfo{journal}{\emph{Games and Economic Behavior}} \bibinfo{volume}{14}, \bibinfo{number}{1} (\bibinfo{year}{1996}), \bibinfo{pages}{124--143}.
\newblock
\showISSN{0899-8256}


\bibitem[\protect\citeauthoryear{Murhekar, Yuan, Ray~Chaudhury, Li, and Mehta}{Murhekar et~al\mbox{.}}{2023}]%
        {ruta2023}
\bibfield{author}{\bibinfo{person}{Aniket Murhekar}, \bibinfo{person}{Zhuowen Yuan}, \bibinfo{person}{Bhaskar Ray~Chaudhury}, \bibinfo{person}{Bo Li}, {and} \bibinfo{person}{Ruta Mehta}.} \bibinfo{year}{2023}\natexlab{}.
\newblock \showarticletitle{Incentives in Federated Learning: Equilibria, Dynamics, and Mechanisms for Welfare Maximization}. In \bibinfo{booktitle}{\emph{Advances in Neural Information Processing Systems}}, Vol.~\bibinfo{volume}{36}. \bibinfo{address}{Red Hook, NY, USA}.
\newblock


\bibitem[\protect\citeauthoryear{Nguyen, Pham, Pathirana, Ding, Seneviratne, et~al\mbox{.}}{Nguyen et~al\mbox{.}}{2022}]%
        {FL_healthcare3}
\bibfield{author}{\bibinfo{person}{Dinh~C. Nguyen}, \bibinfo{person}{Quoc-Viet Pham}, \bibinfo{person}{Pubudu~N. Pathirana}, \bibinfo{person}{Ming Ding}, \bibinfo{person}{Aruna Seneviratne}, {et~al\mbox{.}}} \bibinfo{year}{2022}\natexlab{}.
\newblock \showarticletitle{Federated Learning for Smart Healthcare: A Survey}.
\newblock \bibinfo{journal}{\emph{ACM Comput. Surv.}} \bibinfo{volume}{55}, \bibinfo{number}{3}, Article \bibinfo{articleno}{60} (\bibinfo{date}{Feb.} \bibinfo{year}{2022}), \bibinfo{numpages}{37}~pages.
\newblock
\showISSN{0360-0300}


\bibitem[\protect\citeauthoryear{Nisan, Schapira, and Zohar}{Nisan et~al\mbox{.}}{2008}]%
        {best_reply}
\bibfield{author}{\bibinfo{person}{Noam Nisan}, \bibinfo{person}{Michael Schapira}, {and} \bibinfo{person}{Aviv Zohar}.} \bibinfo{year}{2008}\natexlab{}.
\newblock \showarticletitle{Asynchronous Best-Reply Dynamics}. In \bibinfo{booktitle}{\emph{Internet and Network Economics}}. \bibinfo{publisher}{Springer Berlin Heidelberg}, \bibinfo{address}{Berlin, Heidelberg}, \bibinfo{pages}{531--538}.
\newblock


\bibitem[\protect\citeauthoryear{Ramaswamy, Mathews, Rao, and Beaufays}{Ramaswamy et~al\mbox{.}}{2019}]%
        {emoji}
\bibfield{author}{\bibinfo{person}{Swaroop Ramaswamy}, \bibinfo{person}{Rajiv Mathews}, \bibinfo{person}{Kanishka Rao}, {and} \bibinfo{person}{Françoise Beaufays}.} \bibinfo{year}{2019}\natexlab{}.
\newblock \bibinfo{title}{Federated Learning for Emoji Prediction in a Mobile Keyboard}.
\newblock
\newblock
\showeprint[arxiv]{cs.CL/1906.04329}


\bibitem[\protect\citeauthoryear{Rawls}{Rawls}{1971}]%
        {welfare_function}
\bibfield{author}{\bibinfo{person}{John Rawls}.} \bibinfo{year}{1971}\natexlab{}.
\newblock \bibinfo{booktitle}{\emph{A Theory of Justice: Original Edition}}.
\newblock \bibinfo{publisher}{Harvard University Press}.
\newblock


\bibitem[\protect\citeauthoryear{Sefidgaran, Chor, and Zaidi}{Sefidgaran et~al\mbox{.}}{2022}]%
        {sefidgaran2022rate}
\bibfield{author}{\bibinfo{person}{Milad Sefidgaran}, \bibinfo{person}{Romain Chor}, {and} \bibinfo{person}{Abdellatif Zaidi}.} \bibinfo{year}{2022}\natexlab{}.
\newblock \showarticletitle{Rate-distortion theoretic bounds on generalization error for distributed learning}.
\newblock \bibinfo{journal}{\emph{Advances in Neural Information Processing Systems}}  \bibinfo{volume}{35} (\bibinfo{year}{2022}), \bibinfo{pages}{19687--19702}.
\newblock


\bibitem[\protect\citeauthoryear{Sheller, Edwards, Reina, Martin, Pati, et~al\mbox{.}}{Sheller et~al\mbox{.}}{2020}]%
        {sheller_2020_federated}
\bibfield{author}{\bibinfo{person}{Micah~J. Sheller}, \bibinfo{person}{Brandon Edwards}, \bibinfo{person}{G.~Anthony Reina}, \bibinfo{person}{Jason Martin}, \bibinfo{person}{Sarthak Pati}, {et~al\mbox{.}}} \bibinfo{year}{2020}\natexlab{}.
\newblock \showarticletitle{Federated learning in medicine: facilitating multi-institutional collaborations without sharing patient data}.
\newblock \bibinfo{journal}{\emph{Scientific Reports}}  \bibinfo{volume}{10} (\bibinfo{date}{07} \bibinfo{year}{2020}), \bibinfo{pages}{12598}.
\newblock


\bibitem[\protect\citeauthoryear{Simon}{Simon}{1955}]%
        {bounded_rationality}
\bibfield{author}{\bibinfo{person}{Herbert~A. Simon}.} \bibinfo{year}{1955}\natexlab{}.
\newblock \showarticletitle{A Behavioral Model of Rational Choice}.
\newblock \bibinfo{journal}{\emph{The Quarterly Journal of Economics}} \bibinfo{volume}{69}, \bibinfo{number}{1} (\bibinfo{year}{1955}), \bibinfo{pages}{99--118}.
\newblock


\bibitem[\protect\citeauthoryear{Wu, Amin, and Ozdaglar}{Wu et~al\mbox{.}}{2021}]%
        {wu2021multiagentbayesianlearningbest}
\bibfield{author}{\bibinfo{person}{Manxi Wu}, \bibinfo{person}{Saurabh Amin}, {and} \bibinfo{person}{Asuman Ozdaglar}.} \bibinfo{year}{2021}\natexlab{}.
\newblock \bibinfo{title}{Multi-agent Bayesian Learning with Best Response Dynamics: Convergence and Stability}.
\newblock
\newblock
\showeprint[arxiv]{cs.GT/2109.00719}
\urldef\tempurl%
\url{https://arxiv.org/abs/2109.00719}
\showURL{%
\tempurl}


\bibitem[\protect\citeauthoryear{Xu, Glicksberg, Su, Walker, Bian, et~al\mbox{.}}{Xu et~al\mbox{.}}{2021}]%
        {FL_healthcare4}
\bibfield{author}{\bibinfo{person}{Jie Xu}, \bibinfo{person}{Benjamin~S Glicksberg}, \bibinfo{person}{Chang Su}, \bibinfo{person}{Peter Walker}, \bibinfo{person}{Jiang Bian}, {et~al\mbox{.}}} \bibinfo{year}{2021}\natexlab{}.
\newblock \showarticletitle{Federated learning for healthcare informatics}.
\newblock \bibinfo{journal}{\emph{Journal of healthcare informatics research}}  \bibinfo{volume}{5} (\bibinfo{year}{2021}), \bibinfo{pages}{1--19}.
\newblock


\bibitem[\protect\citeauthoryear{Zeng, Semiari, Chen, Saad, and Bennis}{Zeng et~al\mbox{.}}{2022}]%
        {autonomous}
\bibfield{author}{\bibinfo{person}{Tengchan Zeng}, \bibinfo{person}{Omid Semiari}, \bibinfo{person}{Mingzhe Chen}, \bibinfo{person}{Walid Saad}, {and} \bibinfo{person}{Mehdi Bennis}.} \bibinfo{year}{2022}\natexlab{}.
\newblock \showarticletitle{Federated Learning on the Road Autonomous Controller Design for Connected and Autonomous Vehicles}.
\newblock \bibinfo{journal}{\emph{IEEE Transactions on Wireless Communications}} \bibinfo{volume}{21}, \bibinfo{number}{12} (\bibinfo{year}{2022}), \bibinfo{pages}{10407--10423}.
\newblock


\bibitem[\protect\citeauthoryear{Zhang, Wei, and Berry}{Zhang et~al\mbox{.}}{2021}]%
        {Meng}
\bibfield{author}{\bibinfo{person}{Meng Zhang}, \bibinfo{person}{Ermin Wei}, {and} \bibinfo{person}{Randall Berry}.} \bibinfo{year}{2021}\natexlab{}.
\newblock \showarticletitle{Faithful Edge Federated Learning: Scalability and Privacy}.
\newblock \bibinfo{journal}{\emph{IEEE Journal on Selected Areas in Communications}} \bibinfo{volume}{39}, \bibinfo{number}{12} (\bibinfo{year}{2021}), \bibinfo{pages}{3790--3804}.
\newblock


\bibitem[\protect\citeauthoryear{Zhang, Ma, and Chen}{Zhang et~al\mbox{.}}{2023}]%
        {Zhang2023enabling}
\bibfield{author}{\bibinfo{person}{Ning Zhang}, \bibinfo{person}{Qian Ma}, {and} \bibinfo{person}{Xu Chen}.} \bibinfo{year}{2023}\natexlab{}.
\newblock \showarticletitle{Enabling Long-Term Cooperation in Cross-Silo Federated Learning: A Repeated Game Perspective}.
\newblock \bibinfo{journal}{\emph{IEEE Transactions on Mobile Computing}} \bibinfo{volume}{22}, \bibinfo{number}{7} (\bibinfo{year}{2023}), \bibinfo{pages}{3910--3924}.
\newblock


\bibitem[\protect\citeauthoryear{Zhu, Hong, and Zhou}{Zhu et~al\mbox{.}}{2021}]%
        {zhu2021data}
\bibfield{author}{\bibinfo{person}{Zhuangdi Zhu}, \bibinfo{person}{Junyuan Hong}, {and} \bibinfo{person}{Jiayu Zhou}.} \bibinfo{year}{2021}\natexlab{}.
\newblock \showarticletitle{Data-Free Knowledge Distillation for Heterogeneous Federated Learning}. In \bibinfo{booktitle}{\emph{Proceedings of the 38th International Conference on Machine Learning}} \emph{(\bibinfo{series}{Proceedings of Machine Learning Research})}, Vol.~\bibinfo{volume}{139}. \bibinfo{publisher}{PMLR}, \bibinfo{pages}{12878--12889}.
\newblock


\end{thebibliography}

\end{document}